\documentclass[11pt]{article}

\usepackage{amsmath}
\usepackage{amsthm}
\usepackage{graphicx} 
\usepackage{array} 
\usepackage{bbm}
\usepackage{thm-restate}

\usepackage{fancybox}

\newenvironment{fminipage}%
  {\begin{Sbox}\begin{minipage}}%
  {\end{minipage}\end{Sbox}\fbox{\TheSbox}}

\DeclareMathOperator*{\poly}{poly}
\DeclareMathOperator{\diag}{diag}

\newcommand{\norm}[1]{\left\lVert#1\right\rVert}

\usepackage{amsmath, amssymb, amsfonts, verbatim}
\usepackage{hyphenat,epsfig,subfigure,multirow}
\usepackage{multicol}
\usepackage{nicefrac}
\usepackage{paralist}
\usepackage{enumitem}
\usepackage{tikz}
\usetikzlibrary{positioning}

\usepackage[ruled]{algorithm2e}

\DeclareFontFamily{U}{mathx}{\hyphenchar\font45}
\DeclareFontShape{U}{mathx}{m}{n}{
      <5> <6> <7> <8> <9> <10>
      <10.95> <12> <14.4> <17.28> <20.74> <24.88>
      mathx10
      }{}
\DeclareSymbolFont{mathx}{U}{mathx}{m}{n}
\DeclareMathSymbol{\bigtimes}{1}{mathx}{"91}

\usepackage{tcolorbox}
\tcbuselibrary{skins,breakable}
\tcbset{enhanced jigsaw}

\usepackage[normalem]{ulem}
\usepackage[compact]{titlesec}

\definecolor{DarkRed}{rgb}{0.5,0.1,0.1}
\definecolor{DarkBlue}{rgb}{0.1,0.1,0.5}

\usepackage{nameref}
\definecolor{ForestGreen}{rgb}{0.1333,0.5451,0.1333}
\definecolor{Red}{rgb}{0.9,0,0}
\usepackage[linktocpage=true,
	pagebackref=true,colorlinks,
	linkcolor=DarkRed,citecolor=ForestGreen,
	bookmarks,bookmarksopen,bookmarksnumbered]
	{hyperref}
\usepackage[noabbrev,nameinlink]{cleveref}
\crefname{property}{property}{Property}
\creflabelformat{property}{(#1)#2#3}
\crefname{equation}{eq}{Eq}
\creflabelformat{equation}{(#1)#2#3}

\usepackage{bm}
\usepackage{url}
\usepackage{xspace}
\usepackage[mathscr]{euscript}

\usepackage{tikz}
\usetikzlibrary{arrows}
\usetikzlibrary{arrows.meta}
\usetikzlibrary{shapes}
\usetikzlibrary{backgrounds}
\usetikzlibrary{positioning}
\usetikzlibrary{decorations.markings}
\usetikzlibrary{patterns}
\usetikzlibrary{calc}
\usetikzlibrary{fit}
\usetikzlibrary{snakes}

\usepackage{mdframed}

\usepackage[noend]{algpseudocode}
\makeatletter
\def\BState{\State\hskip-\ALG@thistlm}
\makeatother

\usepackage{cite}
\usepackage{enumitem}

\usepackage[margin=1in]{geometry}

\newtheorem{theorem}{Theorem}[section]

\newtheorem{lemma}{Lemma}[section]

\newtheorem{corollary}[lemma]{Corollary}

\newtheorem{definition}[lemma]{Definition}

\newtheorem*{theorem*}{Theorem}
\newtheorem*{claim*}{Claim}
\newtheorem*{proposition*}{Proposition}
\newtheorem*{lemma*}{Lemma}
\newtheorem*{problem*}{Problem}

\crefname{lemma}{Lemma}{Lemmas}
\crefname{claim}{Claim}{Claims}

\newtheoremstyle{restate}{}{}{\itshape}{}{\bfseries}{~(restated).}{.5em}
  {\thmnote{#3}}
\theoremstyle{restate}

\allowdisplaybreaks

\title{
Approximating Random Walks in $\Otil(\log n + \log^2 \kappa)$ Space for
$\kappa$-Conditioned Graphs
}
\author{
Junzhao Yang\\
{\small Carnegie Mellon University}\\
{\small\texttt{junzhaoy@andrew.cmu.edu}}
}
\date{}

\newcommand{\Mbegin}{\left[\begin{matrix}}
\newcommand{\Mend}{\end{matrix}\right]}

\newcommand{\callalg}[1]{\hyperref[alg:#1]{\textsc{#1}}}

\newcommand{\defeq}{\triangleq}
\newcommand{\Otil}{\widetilde{O}}
\newcommand{\R}{\mathbb{R}}

\newcommand{\bone}{\mathbf{1}}
\newcommand{\eps}{\varepsilon}
\newcommand{\wrt}{\ \mathsf{wrt}\ }
\newcommand{\Zhat}{\overline{Z}}
\newcommand{\rhat}{\widehat{r}}

\begin{document}

\pagenumbering{roman}

\maketitle

\begin{abstract}
For $\kappa>1$, a directed graph is
$\kappa$-conditioned if it is
$\kappa$-mixing and its stationary distribution is approximated by the uniform
distribution within a factor of $\kappa$.
We present a deterministic algorithm that approximates the stationary
distribution of a $\kappa$-conditioned graph to inverse polynomial relative
error in $O((\log n + \log^2 \kappa) \log \log n)$ space.
In the regime $\kappa = \exp(\Theta(\log^\alpha n))$ for any
$\alpha \in (0, 2/3)$, our result improves the best-known
$O(\log n \sqrt{\log \kappa} / \sqrt{\log \log n})$ space bound for approximating
$\kappa$-step random walks in general directed graphs by [Hoza, RANDOM 2021].

We release this preliminary version due to recent rumors of LLM-based
progress on related problems and uncertainty about when those results may
appear.
Further implementation details will be provided in a subsequent version.
\end{abstract}

\pagenumbering{arabic}

\section{Introduction}\label{sec:intro}

Approximating random walks is a basic problem in the derandomization of
space-bounded computation, closely related to the question of whether
$\mathsf{RL}=\mathsf{L}$.
For undirected graphs, Reingold~\cite{R08:journal} showed that connectivity can
be decided in deterministic logarithmic space.
For Eulerian graphs, Ahmadinejad et al.~\cite{AKMPSV20} gave a deterministic
$O(\log n\log\log n)$-space algorithm for polynomial-length walks with
inverse-polynomial additive error.
For general directed graphs, the algorithm of Saks and Zhou~\cite{SZ99:journal}
estimates the probabilities of polynomial-length walks to inverse-polynomial
additive error in $O(\log^{3/2}n)$ space.

We study stationary distributions and walk probabilities for
$\kappa$-conditioned directed graphs as an intermediate problem towards proving
$\mathsf{RL}=\mathsf{L}$.

\begin{definition}[$\kappa$-conditioned graph] \label{def:kappa-conditioned}
A column-stochastic random walk matrix $W \in \R^{n \times n}$ is
$\kappa$-conditioned if it satisfies the following conditions:
\begin{itemize}
\item The stationary distribution $s(W)$ satisfies
  $1/(\kappa n)\le s(W)_i\le\kappa/n$ for every
  $i$.
\item $|(W^\kappa-s(W)\bone^\top)_{ij}|\le n^{-10}$ for all
  $i,j$.
\end{itemize}
\end{definition}

Our main result approximates the stationary distribution to $1/\poly(n)$
relative error.

\begin{theorem}\label{thm:main}
For every fixed constant $c>0$, there is a deterministic algorithm that, given
$n$, an integer
$1 < \kappa\le n$, and adjacency-list access to an
$n$-vertex constant-degree out-regular directed graph whose walk matrix
$W$ is
$\kappa$-conditioned, outputs
$\widehat{s}_1,\ldots,\widehat{s}_n$ satisfying
\[
\left|\widehat{s}_i-s(W)_i\right|
\le n^{-c}s(W)_i
\qquad\text{for every }i\in V,
\]
using
\[
O\!\left((\log n+\log^2\kappa)\log\log n\right)
\]
bits of work space.
\end{theorem}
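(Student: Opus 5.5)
The plan is to reduce stationary-distribution estimation to solving a directed Laplacian system. We then solve that system by a space-efficient Richardson/preconditioned iteration, using the Eulerian-graph machinery of Ahmadinejad et al.~\cite{AKMPSV20} as the preconditioner; this costs $O(\log n\log\log n)$ space per call. The $\log^2\kappa$ term will come from nesting about $\log\kappa$ levels of refinement, each storing $O(\log\kappa)$-bit quantities, with another $\log\log n$ factor for precision.

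\textbf{Step 1 (rough estimate).} Since $W$ is $\kappa$-conditioned, $s(W)_i\in[1/(\kappa n),\kappa/n]$. Hence $\mathrm{diag}(s)$ is approximated within a factor $\kappa^2$ by the uniform scaling $I/n$. Let $L=I-W$. The matrix $L\,\mathrm{diag}(s)$ is Eulerian, meaning its rows and columns both sum to zero, and it is $\kappa$-close to $L/n$ in diagonal scaling. We take $D_0=I/n$ as the initial guess.

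\textbf{Step 2 (iterative improvement of the scaling).} We maintain a diagonal guess $D_t$ with $D_t\approx_{\rho_t}\mathrm{diag}(s)$. At each stage we form $LD_t$, which is ``nearly Eulerian.'' We then compute the correction vector $x$ solving $L x=0$ by preconditioned Richardson iteration
\[
x_{k+1}=x_k-\mathcal{P}(L x_k),
\]
where $\mathcal{P}$ is an approximate pseudoinverse of the Eulerian Laplacian closest to $LD_t$. The Eulerian solver of~\cite{AKMPSV20} supplies $\mathcal{P}$ in $O(\log n\log\log n)$ space, given query access to the entries of $D_t$.

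The mixing condition $\|W^\kappa-s\bone^\top\|_{\max}\le n^{-10}$ will be used to bound the condition number of $L$ restricted to $\bone^\perp$ by $\poly(\kappa)$, up to harmless $n$ factors. With that bound, each iteration contracts the error by a constant. Then $O(\log\kappa)$ Richardson steps per stage shrink $\rho$ from $\kappa$ to $\kappa^{1/2}$, or, more simply, reduce the relative error by a constant factor. After $O(\log\kappa)$ stages the relative error is $O(1)$. A final $O(\log n/\log\kappa)$ stages then reach $n^{-c}$; these last stages are cheap because the scaling is already constant-accurate.

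\textbf{Step 3 (space accounting).} Entries of $D_t$ are never stored. They are recomputed recursively, so the recursion has depth $T=O(\log\kappa)$, plus a short tail handled separately. Each level's stack frame must hold the Richardson step counter, a vertex index, and the current numbers to precision $\poly(\kappa)$ relative error, which is $O(\log\kappa)$ bits. The $O(\log n\log\log n)$ Eulerian-solver space is reused across levels rather than stacked. We use the standard composition of space-bounded algorithms, in which oracle-query space is reused. This gives a total of $O(\log n\log\log n+T\cdot\log\kappa\cdot\log\log n)$.

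\textbf{Main obstacle.} The hardest point is making per-level storage only $O(\log\kappa\log\log n)$, not $O(\log n)$ bits. Vertex indices and inverse-polynomial precision are $\Theta(\log n)$ bits, so storing them at every level would give $\log n\log\kappa$. My first attempt is to keep only $\log\kappa$-bit ``relative'' corrections per level, with the level-$t$ correction being a multiplicative factor in $[1/2,2]$ known to $\poly(\kappa)^{-1}$ accuracy. Full $n^{-c}$ precision would be invoked only at the outermost, non-recursive level. Vertex indices would be passed through a shared register instead of being duplicated in each frame. I would also check error propagation: the composition of the approximate $\mathcal{P}$'s must not amplify error by more than $\poly(\kappa)$ per level, which the Eulerian solver's guarantee should supply.
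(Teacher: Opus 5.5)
Your proposal has a genuine gap at exactly the step you flag as the main obstacle, and the remedy you propose does not work.

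\textbf{The stages are nested, so their space adds.} The stage-$t$ Eulerian solve takes as input a graph built from $LD_t$, and the entries of $D_t$ are outputs of the stage-$(t-1)$ solve. Whenever the stage-$t$ solver needs such an entry, it must suspend with its whole workspace intact while the stage-$(t-1)$ computation runs. Standard composition of space-bounded procedures reuses the callee's space across the caller's successive queries. It is additive along the nesting chain, however. So $\Theta(\log\kappa)$ nested solves of $\Theta(\log n\log\log n)$ bits each cost $\Theta(\log\kappa\cdot\log n\log\log n)$ space. For $\log\kappa\approx\sqrt{\log n}$ this is worse than Saks--Zhou, and your $O(\log n/\log\kappa)$ tail stages, if also nested, only add to it.

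\textbf{Compressing the stored correction does not help.} Storing the level-$t$ correction in $O(\log\kappa)$ bits leaves the dominant suspended state untouched: the solver's own workspace. An \cite{AKMPSV20}-style solver evaluates a Richardson iteration by balanced matrix products. Each of its $\Theta(\log\log n)$ recursion levels holds an $O(\log n)$-bit summation index $v$ in $\sum_v A_{iv}B_{vj}$ and $O(\log n)$-bit partial sums. Such indices cannot be moved into a shared vertex register. That trick only works for navigation-style recursions, which move a single pointer along a labeled edge and back via a reverse label. Your accounting $O(\log n\log\log n+T\log\kappa\log\log n)$ therefore has no supporting argument.

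\textbf{The preconditioner is underspecified.} ``The Eulerian Laplacian closest to $LD_t$'' cannot be formed without essentially knowing $s(W)$. A rebalancing correction is as large as the scaling error, which is up to a factor $\kappa$ at the start. The claim that each preconditioned Richardson step contracts by a constant when $D_t$ is only $\rho$-accurate is asserted, not shown.

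\textbf{The idea you are missing} is the one that lets the paper avoid nested solvers entirely. The paper applies derandomized squaring directly to the non-Eulerian $W$, producing lifted approximations $Q^{(t)}$ of $W^{2^t}$. The only thing that breaks relative to the Eulerian case is that the $\ell_2$ error bound for derandomized squaring needs bounded row sums (in-degree mass). This holds automatically for Eulerian graphs but can blow up here. The paper restores it as follows:
\begin{itemize}
\item It trims heavy rows and returns the removed mass to the diagonal.
\item One-sided error gives a Hall-type condition, so each \textsc{OneStepTrim} leaves at most a $1/\kappa$ fraction of the active heavy rows unresolved.
\item Doing one trimming step per level per round makes the whole chain a single recursion of depth $T+r_{\max}=O(\log\kappa+\log_\kappa n)$.
\item Each frame uses only $O(\log\kappa+\log\log n)$ bits on top of a shared $O(\log n+Th)$-bit address.
\end{itemize}
The $O(\log n\log\log n)$-space Richardson iteration, with preconditioner $B=\prod_{t<T}(I+\overline{Q}^{(t)})\approx\sum_{j<2^T}W^j$, runs only once, at the top. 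Its space is paid once rather than once per level.
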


In particular, the space bound is $O(\log n\log\log n)$ when
$\log\kappa\le\sqrt{\log n}$.
Combining the stationary approximation with the Eulerian walk-estimation
algorithm of Ahmadinejad et al.~\cite{AKMPSV20} gives the following consequence.

\begin{corollary}\label{cor:intro-walk}
For every fixed constant $c>0$, there is a deterministic algorithm that, given a
graph and parameters satisfying the assumptions of Theorem~\ref{thm:main},
vertices $i,j\in V$, and an integer
$1\le k\le \poly(n)$, outputs
$\widehat{p}$ satisfying
\[
\left|\widehat{p}-(W^k)_{ij}\right|
\le n^{-c},
\]
using $O\!\left((\log n+\log^2\kappa)\log\log n\right)$ bits of work space.
\end{corollary}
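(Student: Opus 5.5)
The plan is the standard Eulerian-scaling reduction. Run the algorithm of Theorem~\ref{thm:main} to get $\widehat{s}$ with $\widehat{s}_i=(1\pm n^{-c'})s(W)_i$ for a large constant $c'$. Use it to define a reweighted graph whose walk matrix is still $W$ but which is (approximately) Eulerian. Then apply the Eulerian walk-estimation algorithm of Ahmadinejad et al.~\cite{AKMPSV20} to it, and undo the reweighting. Concretely, the edge $j\to i$ gets weight $W_{ij}\,s(W)_j$. Then out-weights are $s(W)_j$ and in-weights are $(Ws(W))_i=s(W)_i$, so the weighted graph is Eulerian. Its walk matrix is $W$, and $(W^k)_{ij}$ is the $k$-step transition probability from $j$ to $i$ in this Eulerian graph.

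First, I would discretize the weights: set $w_{ij}=\lfloor n^{C}\widehat{s}_jW_{ij}\rfloor$ for a constant $C$. The degree is constant and $s(W)_j\ge 1/(\kappa n)\ge n^{-2}$, so each nonzero $W_{ij}\widehat{s}_j$ is at least $n^{-2}/O(1)$, and rounding costs only a $1\pm n^{-C+3}$ factor. The resulting multigraph is not exactly Eulerian. Its imbalance at each vertex is a relative $n^{-\Omega(c')}$ fraction. Here I would assume (or cite) the robust form of \cite{AKMPSV20}, which handles approximately Eulerian graphs, or alternatively its generic approach of reducing to a nearby exactly Eulerian graph by routing the imbalance through a small correction. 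This is the main obstacle. I would first try bounding the walk-matrix perturbation directly. Since the reweighting leaves the walk matrix entries $w_{ij}/\sum_{i'}w_{i'j}$ equal to $W_{ij}(1\pm n^{-C+3})$, a $k$-step product with $k\le\poly(n)$ changes by at most $k\cdot n^{-C+3}$ in $\ell_1$ per column. So it suffices that the Eulerian algorithm tolerate an $n^{-\Omega(1)}$-relative imbalance, which I believe is how \cite{AKMPSV20} is stated for stationary-weight inputs (they approximate a general graph's walks given its stationary distribution).

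Second, the space accounting. Call the estimator $\widehat{s}$ on demand rather than storing it, using composition of space-bounded algorithms. Each query of an edge weight of the Eulerian instance invokes Theorem~\ref{thm:main}, with space $O((\log n+\log^2\kappa)\log\log n)$ reused between calls. The outer Eulerian algorithm uses $O(\log n\log\log n)$ space on an instance with $\poly(n)$ total weight. Composition adds the two bounds, giving $O((\log n+\log^2\kappa)\log\log n)$ total.

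Finally, the error budget. The Eulerian algorithm returns $(W^k)_{ij}$ up to additive $n^{-c}/2$ for the discretized graph. The perturbation argument above contributes at most $n^{-c}/2$ once $C$ and $c'$ are chosen large relative to $c$ and the exponent bounding $k$. Together these give the claimed $n^{-c}$ additive error.
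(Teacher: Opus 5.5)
There is a genuine gap at the step you label ``the main obstacle,'' and the perturbation argument you offer there does not close it. The guarantee of \cite{AKMPSV20} used here is for Eulerian graphs. Your instance with weights $\lfloor n^{C}\widehat{s}_jW_{ij}\rfloor$ has out-weight about $\widehat{s}_i$ and in-weight about $(W\widehat{s})_i$ at vertex $i$. These differ by the imbalance $b_i=(W\widehat{s}-\widehat{s})_i$, which is nonzero in general because $\widehat{s}$ is only approximately stationary. Even a version of the algorithm that takes the stationary distribution as input would need it exactly for $W\diag(s)$ to be balanced.

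Your estimate $w_{ij}/\sum_{i'}w_{i'j}=W_{ij}(1\pm n^{-C+3})$ only says that the walk matrix of this weighted graph is close to $W$. That is automatic, since rescaling column $j$ by $\widehat{s}_j$ never changes the walk matrix. It says nothing about balance, and it gives no reason to trust the Eulerian algorithm's output on an unbalanced input. The ``robust form'' tolerating $n^{-\Omega(1)}$ relative imbalance is something you say you believe exists, not a specific cited statement. As written, the proof rests on an unverified black box.

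What is missing is an explicit, exactly Eulerian graph whose walk matrix is $\ell_1$-close to $W$: the ``small correction'' you mention but never construct or analyze. The paper takes $F=W\diag(\widehat{s})+b^-(b^+)^\top/\sigma$. Here $b^{\pm}$ are the positive and negative parts of $b$, and $\sigma=\sum_ib_i^+=\sum_ib_i^-$, which are equal by column stochasticity. Then $F\bone=F^\top\bone=\widehat{s}+b^+$.

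If $\widehat{s}$ has per-coordinate relative error $\delta\le1/2$, nonnegativity of $W$ and $Ws(W)=s(W)$ give $|b_i|\le2\delta s(W)_i$. So column $j$ of $\widetilde W=F\diag(\widehat{s}+b^+)^{-1}$ differs from column $j$ of $W$ by $2b_j^+/(\widehat{s}_j+b_j^+)\le8\delta$ in $\ell_1$. Telescoping gives $\norm{\widetilde W^k-W^k}_1\le8k\delta$, which dictates the choice $\delta\le n^{-c}/(16k)$.

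Discretization must also keep the balance exact. Flooring each weight independently, as you do, reintroduces imbalance. Instead, the paper puts $\widehat{s}$ on a common binary grid, so $F$ has a common polynomial-size denominator and scales to an integer Eulerian multigraph. Your round-first order can also be made to work by applying the same correction to the rounded integer weights and clearing the denominator $\sigma$.

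With this construction in place, your space-composition and error-budget paragraphs go through. The correction makes the instance dense, but its size and weights stay polynomial, and $b$ and $\sigma$ can be recomputed from $\widehat{s}$ with $O(\log n)$ extra space.
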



The main construction combines derandomized squaring with a \emph{trimming}
process.
Error accumulates in the plain derandomized squaring, and it fails due to the blow-up of the in-degrees or row sums of the walk matrix power.
Our trimming process rebalances the in-degrees and make the derandomized squaring work, but
requires squared logarithmic space due to multiple rounds of trimming.
We address this issue by defining a trimming step that performs trimming partially, and carefully coordinating the squaring and trimming steps
in a synchronized manner, reducing the overall space complexity.
Finally, a Reingold-style implementation uses a shared vertex register and short local
indices and evaluates the recursion in small space.


\subsection{Related work}

\paragraph{General directed walks.}
Saks and Zhou~\cite{SZ99:journal} gave a deterministic simulation of randomized
space-bounded computation, yielding the classical
$O(\log^{3/2}n)$ space bound for estimating polynomial-length walk
probabilities to inverse-polynomial additive error.
Hoza~\cite{H21} subsequently obtained the bound
$O(\log^{3/2}n/\sqrt{\log\log n})$ using weighted pseudorandom generators
within the Saks--Zhou framework.
These algorithms apply to general directed graphs, without assumptions on
mixing or the stationary distribution.

For shorter walks, their dependence on the walk length is also relevant:
the Saks--Zhou bound for
$\kappa \le\poly(n)$ steps and inverse-polynomial additive error is
$O(\log n\sqrt{\log \kappa})$.
For example, when
$\kappa=\exp(\Theta(\log^\alpha n))$ for a fixed
$\alpha\in(0,2/3)$, this bound for
$\kappa$-step walks is
$O(\log^{1+\alpha/2}n)$, whereas the bound in
Theorem~\ref{thm:main} is
\[
O\!\left((\log n+\log^{2\alpha}n)\log\log n\right).
\]
Our theorem improves the bound of the general setting, but uses both a mixing assumption and a bound on the range of
stationary probabilities.

\paragraph{Recycling state randomness.}
Raz and Reingold~\cite{RR99} derandomize branching programs by recycling the
randomness in their states.
Their unconditional result gives statistical error
$2^{-r}$ for width
$w$ and length
$2^r$, using
$O(\log w)$ space, where
$r=\sqrt{\log w}/(\log\log w)^{c_0}$ for a constant
$c_0>0$, given
$2^r$-factor estimates of intermediate state probabilities.

In graph terms, the estimates approximate the probability of reaching each
vertex at each time, including before the walk mixes.
For a uniform starting distribution, the $\kappa$-conditioning assumption
supplies $\kappa^2$-factor estimates using the uniform distribution.
For $\kappa$-step walks with
$\kappa^{O(1)}$-factor estimates, their construction, combined with the
space-efficient extractors of Raz, Reingold, and Vadhan and Hartman and
Raz~\cite{RRV02:journal,HR03:journal}, approximates walk
probabilities to additive error $\eps\ge n^{-O(1)}$ in space
\[
O\!\left(\left[\log n+\log^2\kappa+
\log\kappa\,\log(1/\eps)\right](\log\log n)^C\right)
\]
for an absolute constant $C>0$.
Our result achieves $O((\log n+\log^2\kappa)\log\log n)$
space for inverse-polynomial additive error, while requiring the additional
mixing-time bound in the $\kappa$-conditioning assumption.

\paragraph{Known stationary distributions and Eulerian graphs.}
Chung, Reingold, and Vadhan~\cite{CRV11:journal} solve directed connectivity
in logarithmic space given a stationary distribution and polynomial mixing time from
the source.
For Eulerian graphs, whose stationary probabilities are determined by the
degrees, Ahmadinejad et al.~\cite{AKMPSV20} estimate polynomial-length walk
probabilities to inverse-polynomial additive error in
$O(\log n\log\log n)$ space.
Our result computes the stationary distribution under the $\kappa$-conditioned
assumptions.



\section{Preliminaries}\label{sec:prelim}

\subsection{Notations}
Let $V$ be a vertex set of size
$n\ge2$.
All logarithms are to base two unless a base is displayed.
We assume an integer $\kappa\ge2$ with
$\kappa\le\poly(n)$.
The vector $\bone$ denotes the all-ones vector in the relevant dimension, and
$\bone_I$ denotes the indicator of a set
$I$.
For nonnegative vectors, $x\approx_K y$ means coordinatewise comparison within a
factor $K$ when
$K\ge1$, and within a factor
$1+K$ when
$0\le K<1$.

Let $W$ be the nonnegative column-stochastic transition matrix of a
constant-degree out-regular directed graph on
$V$.
Thus $W_{ij}$ is the transition probability from
$j$ to
$i$.
Write $s(W)$ for its positive stationary probability vector and
$S(W)\defeq\diag(s(W))$.
We recall the $\kappa$-conditioning assumptions from
Definition~\ref{def:kappa-conditioned}:
\begin{itemize}
\item $s(W)\approx_\kappa \frac{1}{n}\bone$, equivalently
  $1/(\kappa n)\le s(W)_i\le\kappa/n$ for every
  $i$.
\item $|(W^\kappa-s(W)\bone^\top)_{ij}|\le n^{-10}$ for all
  $i,j$.
\end{itemize}

\subsection{Norms and Approximation}
We use $\norm{x}_2$ and
$\norm{A}_2$ for the Euclidean and spectral norms, respectively.
The vector norm $\norm{x}_1$ is the sum of absolute coordinates and
$\norm{x}_\infty$ is their maximum.
For matrices, $\norm{A}_1$ and
$\norm{A}_\infty$ denote the maximum absolute column sum and row sum,
respectively.
For a positive definite matrix $M$, define the weighted vector norm by
\[
\norm{x}_M\defeq\sqrt{x^\top Mx},
\]
and the induced operator norm by
\[
\norm{A}_M
\defeq\sup_{x\ne0}\frac{\norm{Ax}_M}{\norm{x}_M}
=\norm{M^{1/2}AM^{-1/2}}_2.
\]
The stationary norm is obtained by taking $M=S(W)^{-1}$.
These stationary weights are used only in the analysis and are not supplied to
the algorithm.

We also compare matrices using approximation with respect to error matrices
from~\cite{APPSV23}, together with a one-sided variant.
\begin{definition}\label{def:approximation}
Let $A,B\in\mathbb R^{n\times m}$, let
$E\in\mathbb R^{n\times n}$ and
$F\in\mathbb R^{m\times m}$ be positive semidefinite error matrices, and let
$\eps\ge0$.
We say that $A$ is an
$\eps$-approximation of
$B$ with respect to
$(E,F)$, written
$A\approx_\eps B\wrt(E,F)$, if
\[
|x^\top(A-B)y|\le\eps\sqrt{x^\top Ex}\sqrt{y^\top Fy}
\quad\text{for all }x\in\mathbb R^n,\ y\in\mathbb R^m.
\]
We say that $A$ is a one-sided
$\eps$-approximation of
$B$ with respect to
$(E,F)$, written
$A\approx^+_\eps B\wrt(E,F)$, if
\[
x^\top(A-B)y\le\eps\sqrt{x^\top Ex}\sqrt{y^\top Fy}
\quad\text{for all }x\in\mathbb R^n_{\ge0},\ y\in\mathbb R^m_{\ge0}.
\]
\end{definition}

The two-sided notion is equivalently characterized by
\[
|x^\top(A-B)y|\le\frac{\eps}{2}(x^\top Ex+y^\top Fy)
\quad\text{for all }x,y.
\]

\subsection{Label Space, Lifting and Compression}
Derandomized squaring enlarges the state space by adding label coordinates.
We introduce the averaging, lifting, and compression operators used to compare
matrices on these spaces of different sizes.

For a finite nonempty set $L$, define its uniform averaging matrix by
$J_L\defeq\bone_L\bone_L^\top/|L|$.
Write $[d]\defeq\{1,\ldots,d\}$ and
$J_d\defeq J_{[d]}$.
Tensor products of matrices are denoted by $\otimes$.

Let $T$ denote the number of levels.
For positive integers $d_1,\ldots,d_T$ to be specified by the construction, let
$L_0\defeq\{1\}$ and
$L_t\defeq L_{t-1}\times[d_t]$ for
$1\le t\le T$.
The corresponding state space at level $t$ is
$V\times L_t$.
\begin{definition} [Uniform lifting and compression] \label{def:lifting}
For $1\le t\le T$, define the lifting matrix from
$\mathbb R^{V\times L_{t-1}}$ to
$\mathbb R^{V\times L_t}$ by
\[
E_t\defeq I_{V\times L_{t-1}}
\otimes\frac{\bone_{d_t}}{\sqrt{d_t}}.
\]
For a matrix $A$ on
$V\times L_{t-1}$ and a matrix
$B$ on
$V\times L_t$, define the uniform lift and compression by
\[
\operatorname{Lift}_t(A)\defeq E_tAE_t^\top,
\qquad
\operatorname{Comp}_t(B)\defeq E_t^\top BE_t.
\]
\end{definition} 

The uniform lift satisfies
\[
\operatorname{Lift}_t(A)=A\otimes J_{d_t}.
\]
Since $E_t^\top E_t=I$, compression inverts the uniform lift:
\[
\operatorname{Comp}_t\bigl(\operatorname{Lift}_t(A)\bigr)=A.
\]

\subsection{Expanders}\label{subsec:expanders}
We use undirected regular spectral expanders, allowing parallel edges and self-loops.
\begin{definition}[$\eps$-spectral expander]\label{def:expander}
An undirected regular multigraph $H$ on
$D$ vertices is an
$\eps$-spectral expander if its random walk matrix
$A_H$ satisfies
\[
\norm{A_H-J_D}_2\le\eps.
\]
\end{definition}

The following form of the small-bias construction of Alon, Goldreich,
H\aa stad, and Peralta~\cite{AGHP92:journal} gives an expander with
power-of-two size and degree, together with a low-space neighbor operation.
We use the following low-space construction for the expander graphs in our derandomized squaring procedure.

\begin{lemma}\label{lem:expander}
For every power of two $D\ge2$ and every
$0<\eps\le1/2$, there is an undirected
$d$-regular
$\eps$-spectral expander on
$[D]$, where
$d$ is a power of two satisfying
\[
d=O\!\left(\left(\frac{\log D}{\eps}\right)^2\right).
\]
The adjacency list of each vertex can be enumerated using
$
O\!\left(\log D+\log(1/\eps)\right)
$
bits of work space.
\end{lemma}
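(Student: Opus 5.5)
The plan is to instantiate the Alon--Goldreich--H\aa stad--Peralta small-bias construction and convert small bias into spectral expansion through the Cayley graph on $\mathbb{F}_2^m$, where $D=2^m$. We identify $[D]$ with $\mathbb{F}_2^m$. For a multiset $S\subseteq\mathbb{F}_2^m$, the Cayley graph $\mathrm{Cay}(\mathbb{F}_2^m,S)$ has an edge from $v$ to $v+s$ for each $s\in S$. Since $-s=s$ in characteristic two, this graph is undirected and $|S|$-regular, with self-loops arising from $s=0$ and parallel edges from repeated elements. Its walk matrix is diagonalized by the characters $\chi_a(v)=(-1)^{\langle a,v\rangle}$, with eigenvalues $\lambda_a=\frac{1}{|S|}\sum_{s\in S}(-1)^{\langle a,s\rangle}$. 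The eigenvalue $\lambda_0=1$ corresponds to $J_D$. Therefore $\norm{A_H-J_D}_2=\max_{a\ne0}|\lambda_a|$, and this quantity is exactly the bias of $S$. It follows that an $\eps$-biased multiset $S$ yields an $\eps$-spectral expander of degree $|S|$.

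Next we use the AGHP ``powering'' construction. Fix $k=2^j$ with $k\ge m$, and work in $\mathbb{F}_{2^k}$. For each pair $x,y\in\mathbb{F}_{2^k}$, define $s_{x,y}\in\mathbb{F}_2^m$ by $(s_{x,y})_i=\langle \mathrm{bin}(x^i),\mathrm{bin}(y)\rangle$ for $i=0,\dots,m-1$. For $a\ne0$, the bias contribution is governed by the polynomial $p_a(z)=\sum_i a_i z^i$, which is nonzero of degree less than $m$. When $p_a(x)\ne0$, the inner product with a uniform $y$ is unbiased. Hence the bias is at most $\Pr_x[p_a(x)=0]\le (m-1)/2^k$. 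We choose $k$ as the least power of two with $2^k\ge m/\eps$, which gives bias at most $\eps$. The degree is then $d=2^{2k}$, a power of two because $2k$ is an integer. The bound $d\le (2m/\eps)^2$ holds up to the rounding of $2^k$, and since $m=\log D$ this gives $d=O((\log D/\eps)^2)$. To be careful about the power-of-two rounding of $k$, we note that only $2^k$ itself needs to be a power of two. We may therefore take $k=\lceil\log(m/\eps)\rceil$ directly, without requiring $k$ to be a power of two. This gives $2^k<2m/\eps$ and hence $d<4(m/\eps)^2$.

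For the space bound, enumerating the neighbors of $v$ means iterating over pairs $(x,y)$ using $2k=O(\log\log D+\log(1/\eps))$ bits. For each pair we output $v+s_{x,y}$ bit by bit. Each bit $(s_{x,y})_i$ is obtained by computing $x^i$ through repeated multiplication in $\mathbb{F}_{2^k}$, using an accumulator of $O(k)$ bits together with the index $i$ of $O(\log m)$ bits. The output vertex itself occupies $m$ bits, for a total of $O(\log D+\log(1/\eps))$.

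The main obstacle is performing arithmetic in $\mathbb{F}_{2^k}$ in small space, which requires an explicit irreducible polynomial of degree $k$. We would find one by brute force: enumerate the candidate degree-$k$ polynomials and test irreducibility by trial division against all polynomials of lower degree. This needs only $O(k)$ bits, because each candidate and each divisor fits in $O(k)$ bits and polynomial division can be carried out in place. Alternatively, we could use the explicit irreducible $x^{2\cdot3^\ell}+x^{3^\ell}+1$ over $\mathbb{F}_2$. This choice, however, would force $k=2\cdot 3^\ell$ and require adjusting the bound by a constant factor. Brute force is simpler and suffices within the stated space.
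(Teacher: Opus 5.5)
Your proposal is correct and follows the same route as the paper: the Cayley-graph reading of the AGHP powering construction over $\mathbb{F}_{2^k}$ (the bias of the generating multiset equals $\norm{A_H-J_D}_2$), with $k=\lceil\log(\log D/\eps)\rceil$ so that $d=2^{2k}<4(\log D/\eps)^2$, an irreducible polynomial found by exhaustive search in $O(k)$ space, and $O(\log D)$ bits for the vertex. One clean-up is needed in the write-up: delete the opening requirements that $k$ itself be a power of two and that $k\ge m$, since they are inconsistent with your final choice of $k$ (in particular, $k\ge m$ would force $d\ge D^2$).
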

\begin{proof}
The standard Cayley-graph interpretation of Construction~3 and
Proposition~3 in~\cite{AGHP92:journal} gives the claimed degree and spectral
bound.
In that construction, each generator bit is computed by arithmetic over a
binary field whose elements use
$O(\log\log D+\log(1/\eps))$ bits.
This arithmetic, including finding a defining irreducible polynomial by
exhaustive search, uses space linear in that bit length.
Together with the $O(\log D)$ bits needed to store a vertex, this gives the
claimed work-space bound.
\end{proof}

\section{One-Dimensional Chain}\label{sec:one-dim}

In this section, we define the one-dimensional matrix chain that characterizes the stable matrix chain in the two-dimensional case. This gives the approximation and trimming guarantees for the final stable matrices.

Starting from $Q^{(0)}=W$, each level
$1\le t\le T$ consists of the following three steps:
\begin{align}
P^{(t)}
&\defeq \textsc{DerandSquare}(Q^{(t-1)}, H_t)
&&\text{(\Cref{def:derand-square})}
\label{eq:one-dim-square}\\
Z^{(t)}
&\defeq \textsc{Trim}(P^{(t)}, \alpha, \beta_t)
&&\text{(\Cref{def:1D-trimming})}
\label{eq:one-dim-trim}\\
Q^{(t)}
&\defeq P^{(t)}-Z^{(t)}
+\diag\bigl((Z^{(t)})^\top\bone\bigr)
&&
\label{eq:one-dim-assemble}
\end{align}
Equation~\eqref{eq:one-dim-square} defines the raw square
$P^{(t)}$ by applying derandomized squaring
(\Cref{def:derand-square}) to the previous matrix
$Q^{(t-1)}$, where
$H_t$ is the level-$t$ expander on the label space
$[d_t]$.
Equation~\eqref{eq:one-dim-trim} defines the deletion matrix
$Z^{(t)}$ through the trimming process
(\Cref{def:1D-trimming}), with row budget
$\alpha$ and column budget
$\beta_t$.
Equation~\eqref{eq:one-dim-assemble} forms
$Q^{(t)}$ by removing this mass and returning each column's deleted mass to
its diagonal entry, preserving column stochasticity.
Under the assumptions and parameter choices in
Section~\ref{subsec:one-dim-guarantees}, Lemma~\ref{lem:one-dim-invariants}
shows that the resulting chain satisfies the target stable matrix chain
conditions in \Cref{def:target-stable-matrix-chain}.

\subsection{Definitions and Invariants}

At level $t$, derandomized squaring lifts the previous state space by adding
the level-$t$ label space $[d_t]$.
We first define the lifted state space and the ideal lifted walk matrix as
follows.

\begin{definition}[The lifted state space and the ideal lifted walk matrix] \label{def:ideal-lifted-walk}
The lifted state space at level $t > 0$ is defined as
\[
L_t \defeq L_{t-1} \times [d_t],
\]
where $L_0 \defeq \{1\}$ denotes the singleton set and $[d_t] = \{1,\ldots,d_t\}$.
The ideal lifted walk matrix at level $t$ is defined as
\[
X^{(t)} \defeq W^{2^t} \otimes J_{L_t}
\]
Let $s^{(t)}$ denote the stationary distribution of
$X^{(t)}$, i.e., a tensor product of $s(W)$ with the uniform distribution. Let $S^{(t)} \defeq \diag(s^{(t)})$.
\end{definition}

\begin{definition}[The target stable matrix chain] \label{def:target-stable-matrix-chain}
	 A matrix chain 
	 $
	 \{ Q^{(t)} \}_{t = 0}^{T}
	$
	is called \emph{the target stable matrix chain} with parameters $(\Gamma, \eta_0, \dots, \eta_T)$ if $Q^{(t)}$ is a nonnegative square column-stochastic matrix defined on $V \times L_t$ and satisfies the following invariants:
	\begin{itemize}
		\item Full operator error:
		\[
		\bigl\|Q^{(t)}-X^{(t)}\bigr\|_{(S^{(t)})^{-1}}
		\le\Gamma\eta_t,
		\]
		\item Bounded row-mass:
		\[
		Q^{(t)}\mathbf1\le O(\kappa^2)\mathbf1.
		\]
		\item One-sided error:
		\[
		Q^{(t)}
		\approx_{\eta_t}^{+}X^{(t)}
		\ \mathrm{wrt}\ (S^{(t)},(S^{(t)})^{-1})
		\]
\end{itemize}

\end{definition}



We will pick the parameters for the target stable matrix chain as follows:
\[
\eta_t=8^t\eta_0,\qquad
\Gamma=\kappa^{O(1)}(\log n)^{O(1)}.
\]
where $\eta_0$ will be picked sufficiently small that
\[
\Gamma\eta_T\le O(\kappa^{-2}).
\]


The full operator error is the final error bound required by the Richardson
iteration.
The one-sided error controls the \emph{trimming} stage used to keep the
in-degree mass bounded, which we use to address the following main difficulty:
Even though $W^{2^t}$ has the same stationary distribution
$s(W)$ and satisfies
$W^{2^t} \bone \le \kappa^2 \bone$, the
$\ell_2$-approximation does not control row sums.
Derandomized squaring does not work if the row sums (i.e., in-degree mass) blow
up too much.
We therefore use trimming to bound $Q^{(t)} \bone \le O(\kappa^2) \bone$.

\subsection{Derandomized Squaring}

In this section, we describe the derandomized squaring used to construct the target stable matrix chain \Cref{def:target-stable-matrix-chain}.
This derandomized squaring operation is also used in the construction of the two-dimensional matrix chain.

\begin{definition}[Derandomized squaring at level $t$] \label{def:derand-square}
	Let $Q$ be a nonnegative column-stochastic matrix at level $t-1$, i.e., defined on $V \times L_{t-1}$. Let $H$ denote an $\eps$-spectral expander graph defined on level-$t$ label space $[d_t]$. The derandomized squaring of $Q$ with expander $H$ plugged in, denoted by
	\[
  P \gets \textsc{DerandSquare}(Q,H),
  \]
  is defined as follows:
  Let $A_H$ be the random walk matrix of
  $H$, and set
  $
  \widehat{A}_H\defeq I_{V\times L_{t-1}}\otimes A_H.
  $
  Fix an arbitrary nonnegative column-stochastic transition lift matrix $L(Q)$ on
  $V\times L_t$ satisfying
  \[
  \operatorname{Comp}_t(L(Q)) = Q.
  \]
  and we obtain
  \[
  P\defeq\widehat{A}_H L(Q)\widehat{A}_H L(Q)\widehat{A}_H.
  \]
\end{definition}






This definition specifies the matrix operation.
Its sparse low-space realization is discussed in
Section~\ref{sec:implementation}.
We define the derandomized square matrix chain by
\begin{align} \label{eq:P-chain}
P^{(t)} \gets \textsc{DerandSquare}(Q^{(t-1)}, H_t)
\end{align}
for all $t$. Consequently, $P^{(t)}$ is nonnegative and column-stochastic and satisfies the following properties.


\begin{lemma}\label{lem:derand-square}
Let $1\le t\le T$, and suppose
$Q^{(t-1)}$ satisfies the level-$(t-1)$ invariants of
\Cref{def:target-stable-matrix-chain} with
parameters  $(\Gamma, \eta_0, \dots, \eta_T)$.
Assume $\Gamma\ge1$ and $\Gamma\eta_{t-1}\le1$.
Assume that its transition lift satisfies
\[
L(Q^{(t-1)})\bone\le C \bone.
\]
for $C \le O(\kappa^2)$
and the expander $H_t$ is an $\eps_t$-spectral expander with 
$
\eps_t\le\frac{\eta_{t-1}}{3\kappa C}.
$
Then $P^{(t)}$ is nonnegative and column-stochastic, and satisfies the following bounds:
\begin{itemize}
\item\label{itm:derand-square:1} Increased bounded row-mass:
  \[
  P^{(t)}\bone\le C^2\bone=O(\kappa^4)\bone.
  \]
\item\label{itm:derand-square:2} One-sided error:
  \[
  P^{(t)}
  \approx_{4\eta_{t-1}}^{+}X^{(t)}
  \wrt\bigl(S^{(t)},(S^{(t)})^{-1}\bigr).
  \]
\item\label{itm:derand-square:3} Full operator error:
  \[
  \norm{P^{(t)}-X^{(t)}}_{(S^{(t)})^{-1}}
  \le4\Gamma\eta_{t-1}.
  \]
\end{itemize}
\end{lemma}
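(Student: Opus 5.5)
Write $Q\defeq Q^{(t-1)}$, $X\defeq X^{(t-1)}$, $S\defeq S^{(t-1)}$, $\widehat{L}\defeq L(Q)$, $\widehat{A}\defeq\widehat{A}_H$, $\widehat{J}\defeq I_{V\times L_{t-1}}\otimes J_{d_t}$. The plan is to split the expander as $\widehat{A}=\widehat{J}+\widehat{R}$ with $\norm{\widehat{R}}_2\le\eps_t$, expand $P^{(t)}=\widehat{A}\widehat{L}\widehat{A}\widehat{L}\widehat{A}$, and reduce everything to the properties of $Q$ through the lift/compression identities. Nonnegativity and column-stochasticity are immediate, since $\widehat{A}$ is symmetric doubly stochastic and $\widehat{L}$ is nonnegative column-stochastic. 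For the row-mass bound I will use $\widehat{A}\bone=\bone$ and $\widehat{L}\bone\le C\bone$ with entrywise monotonicity: $P^{(t)}\bone=\widehat{A}\widehat{L}\widehat{A}\widehat{L}\bone\le C\,\widehat{A}\widehat{L}\bone\le C^2\bone$.

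\textbf{Main term.} Since $\widehat{J}=E_tE_t^\top$, the term with all three $\widehat{J}$ is
\[
E_t\,\operatorname{Comp}_t(\widehat{L})\,\operatorname{Comp}_t(\widehat{L})\,E_t^\top=\operatorname{Lift}_t(Q^2)=Q^2\otimes J_{d_t}.
\]
Also $X^{(t)}=X^2\otimes J_{d_t}$, because $X^2=W^{2^t}\otimes J_{L_{t-1}}$. The stationary weights factor as $S^{(t)}=S\otimes \tfrac1{d_t}I$, and lifting is an isometry between the matching weighted norms. Hence $\norm{Q^2\otimes J-X^{(t)}}_{(S^{(t)})^{-1}}\le\norm{Q^2-X^2}_{S^{-1}}$. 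I will bound this through $Q^2-X^2=(Q-X)Q+X(Q-X)$, which gives $\le\Gamma\eta_{t-1}(\norm{Q}_{S^{-1}}+1)$. Here $\norm{X}_{S^{-1}}\le1$, and $\norm{Q}_{S^{-1}}\le 1+\Gamma\eta_{t-1}\le2$, so the total is at most $3\Gamma\eta_{t-1}$.

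For the one-sided claim I will use the same decomposition, but tested against nonnegative $x,y$. The term $x^\top(Q-X)Qy$ is controlled by the one-sided hypothesis applied to the nonnegative vector $Qy$. I then need $\norm{Qy}_{S^{-1}}$, which follows from $\norm{Q}_{S^{-1}}\le2$. The term $x^\top X(Q-X)y$ uses $X^\top x\ge0$ and $\norm{X^\top x}_S\le\norm{x}_S$. Together these give $3\eta_{t-1}$. Lifting nonnegative test vectors through $E_t^\top$ preserves nonnegativity, so the bound transfers to level $t$.

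\textbf{Error terms (main obstacle).} The remaining seven terms each contain at least one factor $\widehat{R}$. They must be bounded in the weighted norm $\norm{\cdot}_{(S^{(t)})^{-1}}$, whereas the spectral bound on $\widehat{R}$ holds only in the unweighted $\ell_2$ norm. My approach is to note that $\widehat{R}$ acts only on labels and $S^{(t)}$ is constant along labels, so $\widehat{R}$ commutes with $S^{(t)}$ and therefore $\norm{\widehat{R}}_{(S^{(t)})^{-1}}\le\eps_t$. That leaves the weighted norm of $\widehat{L}$. By Riesz--Thorin between column sums ($\le1$) and row sums ($\le C$), $\norm{\widehat{L}}_2\le\sqrt{C}$. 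Converting to the weighted norm costs a factor $\kappa$, since $s\approx_\kappa$ uniform. This gives $\norm{\widehat{L}}_{(S^{(t)})^{-1}}\le\kappa\sqrt C$ and $\norm{\widehat{A}}\le1$.

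Each error term is therefore at most $\eps_t\kappa^2 C$, up to a constant from counting terms, provided I use the more careful grouping $\widehat{A}\widehat{L}\widehat{A}\widehat{L}\widehat{A}-\widehat{J}\widehat{L}\widehat{J}\widehat{L}\widehat{J}$ as a telescoping sum of three differences. Each difference then involves one $\widehat{R}$ and two copies of $\widehat{L}$. With $\eps_t\le\eta_{t-1}/(3\kappa C)$, this telescoped error should be at most $\eta_{t-1}$. The one inequality I expect to need care is that the conversion loss is exactly $\kappa$ rather than $\kappa^2$. If it is not, I would instead bound $\widehat{L}$ as a map from $(S^{(t)})^{-1}$ to $\ell_2$ and from $\ell_2$ to $S^{(t)}$, absorbing one $\sqrt\kappa$ on each side.

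A two-sided weighted-norm bound of $\eta_{t-1}$ implies both the one-sided bound, via Cauchy--Schwarz in the $(S,S^{-1})$ pairing, and the full-norm bound, using $\Gamma\ge1$. Adding these to the main-term bounds gives $4\eta_{t-1}$ and $4\Gamma\eta_{t-1}$.
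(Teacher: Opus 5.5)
Your overall structure matches the paper's. The row mass follows from $\widehat A\bone=\bone$ and $\widehat L\bone\le C\bone$. The main term $\operatorname{Lift}_t(Q^2)$ is handled through $Q^2-X^2=(Q-X)Q+X(Q-X)$, giving $3\Gamma\eta_{t-1}$ and the one-sided bound $3\eta_{t-1}$. Finally, a weighted-norm bound of $\eta_{t-1}$ on the expander error feeds both conclusions via Cauchy--Schwarz and $\Gamma\ge1$.

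\textbf{The gap.} The expander-error estimate does not close as written. With your per-factor bound $\norm{\widehat L}_{(S^{(t)})^{-1}}\le\kappa\sqrt C$, each of the three telescoped terms is at most $\kappa^2C\eps_t$. The total is then $3\kappa^2C\eps_t\le\kappa\eta_{t-1}$ under $\eps_t\le\eta_{t-1}/(3\kappa C)$, not $\eta_{t-1}$. Your main line pays the $\ell_2$-to-stationary conversion factor $\kappa$ once for each of the two $\widehat L$ factors. So the worry you flagged is real, and the fallback is actually required.

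\textbf{The fix.} This is exactly what the paper does, and it is what your fallback amounts to. Carry out the whole telescoping in the unweighted norm, where $\norm{\widehat L}_2\le\sqrt C$, $\norm{\widehat A}_2=\norm{\widehat J}_2=1$, and $\norm{\widehat R}_2\le\eps_t$. This gives $\norm{P^{(t)}-\operatorname{Lift}_t(Q^2)}_2\le 3C\eps_t$. Only then convert, once for the whole product, using
$\norm{A}_{(S^{(t)})^{-1}}\le\norm{(S^{(t)})^{-1/2}}_2\,\norm{A}_2\,\norm{(S^{(t)})^{1/2}}_2\le\kappa\norm{A}_2$.
The last inequality holds because the entries of $s^{(t)}$ lie within a factor $\kappa$ of uniform. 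This yields $3\kappa C\eps_t\le\eta_{t-1}$, as needed. With this change, your observation that $\widehat R$ commutes with $S^{(t)}$ is no longer needed, and the rest of your argument goes through exactly as in the paper.
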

\begin{proof}
Write $Q=Q^{(t-1)}$,
$X=X^{(t-1)}$,
$S=S^{(t-1)}$, and
$\Pi\defeq E_tE_t^\top$.
By $E_t^\top L(Q) E_t = Q$, we have
\[
\Pi L(Q)\Pi L(Q)\Pi=\operatorname{Lift}_t(Q^2).
\]
Column stochasticity and the row-mass bound give
$\norm{L(Q)}_2\le\sqrt C$.
Replacing the three expander factors by $\Pi$ one at a time, using
$\norm{\widehat{A}_H-\Pi}_2\le\eps_t$ and the fact that both factors have
spectral norm one, gives
\[
\norm{P^{(t)}-\operatorname{Lift}_t(Q^2)}_2\le3C\eps_t.
\]
Translating $\ell_2$-norm into the stationary norm loses at most a factor of
$\kappa$, so
\begin{equation}\label{eq:derand-square-lift-error}
\norm{P^{(t)}-\operatorname{Lift}_t(Q^2)}_{(S^{(t)})^{-1}}
\le3\kappa C\eps_t\le\eta_{t-1}.
\end{equation}
Also, $\norm{X}_{S^{-1}}\le1$ by the property of the stationary distribution, and hence
$\norm{Q}_{S^{-1}}\le1+\Gamma\eta_{t-1}\le2$.
Uniform lifting preserves stationary operator bounds and one-sided
approximation: in stationary coordinates it tensors the matrices with
$J_{d_t}$, a nonnegative orthogonal projection.
Moreover, $\operatorname{Lift}_t(X^2)=X^{(t)}$.
\begin{itemize}
\item \textbf{Row mass.}
  All factors are nonnegative and column-stochastic, and
  $\widehat{A}_H\bone=\bone$.
  Applying the row-mass bound to the two occurrences of
  $L(Q)$ gives
  $P^{(t)}\bone\le C^2\bone$.
\item \textbf{One-sided error.}
  For $x,y\ge0$, expand
  $Q^2-X^2=(Q-X)Q+X(Q-X)$ and apply the input one-sided bound twice:
  \[
  \begin{aligned}
  x^\top(Q^2-X^2)y
  &\le\eta_{t-1}\left(
  \norm{x}_S\norm{Qy}_{S^{-1}}
  +\norm{X^\top x}_S\norm{y}_{S^{-1}}\right)\\
  &\le3\eta_{t-1}\norm{x}_S\norm{y}_{S^{-1}}.
  \end{aligned}
  \]
  For nonnegative vectors $u,v$ on $V\times L_t$, uniform lifting
  preserves this one-sided bound.
  Thus, by Cauchy--Schwarz and
  Equation~\eqref{eq:derand-square-lift-error},
  \[
  \begin{aligned}
  u^\top(P^{(t)}-X^{(t)})v
  &=u^\top\bigl(P^{(t)}-\operatorname{Lift}_t(Q^2)\bigr)v\\
  &\quad+u^\top\operatorname{Lift}_t(Q^2-X^2)v\\
  &\le\left(
    \norm{P^{(t)}-\operatorname{Lift}_t(Q^2)}_{(S^{(t)})^{-1}}
    +3\eta_{t-1}\right)
    \norm{u}_{S^{(t)}}\norm{v}_{(S^{(t)})^{-1}}\\
  &\le4\eta_{t-1}\norm{u}_{S^{(t)}}\norm{v}_{(S^{(t)})^{-1}}.
  \end{aligned}
  \]
\item \textbf{Full operator error.}
  The same expansion gives
  \[
  \norm{Q^2-X^2}_{S^{-1}}
  \le\left(\norm{Q}_{S^{-1}}+\norm{X}_{S^{-1}}\right)
  \norm{Q-X}_{S^{-1}}
  \le3\Gamma\eta_{t-1}.
  \]
  Since $\operatorname{Lift}_t(X^2)=X^{(t)}$ and uniform lifting preserves
  the stationary operator norm, the triangle inequality and
  Equation~\eqref{eq:derand-square-lift-error} give
  \[
  \begin{aligned}
  \norm{P^{(t)}-X^{(t)}}_{(S^{(t)})^{-1}}
  &\le \norm{P^{(t)}-\operatorname{Lift}_t(Q^2)}_{(S^{(t)})^{-1}}\\
  &\quad+\norm{\operatorname{Lift}_t(Q^2-X^2)}_{(S^{(t)})^{-1}}\\
  &\le\eta_{t-1}+3\Gamma\eta_{t-1}\\
  &\le4\Gamma\eta_{t-1}.
  \end{aligned}
  \]
\end{itemize}
\end{proof}

\subsection{Trimming for the One-Dimensional Chain}

In this section, we define the trimming process for the one-dimensional chain.
This process is used only in the analysis.
In the two-dimensional chain, it is implemented through multiple trimming
steps.

\begin{definition}[Trimming process] \label{def:1D-trimming}
Let $P$ be a nonnegative square column-stochastic matrix.
The trimming of $P$ with row budget $\alpha\ge0$ and column budget
$\beta\ge0$, denoted by
\[
Z\gets\textsc{Trim}(P,\alpha,\beta),
\]
returns a deletion matrix $Z$ satisfying
\[
0\le Z\le P,
\qquad
(P-Z)\bone\le\alpha\bone,
\qquad
Z^\top\bone\le\beta\bone,
\]
whenever such a matrix exists.
\end{definition}

The deletion matrix records the mass removed by trimming.
Returning this mass to the diagonal preserves column sums.
The following lemma bounds the resulting row mass and approximation error.

\begin{lemma}\label{lem:trimming}
Let $P$ be a nonnegative column-stochastic matrix on $V\times L_t$ with
$P\bone\le C^2\bone$.
Let $Z\gets\textsc{Trim}(P,\alpha,\beta)$ and define
\[
Q\defeq P-Z+\diag(Z^\top\bone).
\]
Then $Q$ is nonnegative and column-stochastic, and satisfies:
\begin{itemize}
\item\label{itm:trimming:1} Bounded row-mass:
  \[
  Q\bone\le(\alpha+\beta)\bone.
  \]
\item\label{itm:trimming:2} One-sided error:
  \[
  Q\approx_{\beta}^{+}P
  \wrt\bigl(S^{(t)},(S^{(t)})^{-1}\bigr).
  \]
\item\label{itm:trimming:3} Full operator error:
  \[
  \norm{Q-P}_{(S^{(t)})^{-1}}\le\kappa C\sqrt{\beta}+\beta.
  \]
\end{itemize}
\end{lemma}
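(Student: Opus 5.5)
Write $D\defeq\diag(Z^\top\bone)$, so that $Q-P=D-Z$. The plan is to verify the three items separately. Nonnegativity and column stochasticity come first and are immediate. Since $0\le Z\le P$, the matrix $P-Z$ is nonnegative, and $D\ge0$. Moreover $\bone^\top Q=\bone^\top P-\bone^\top Z+(Z^\top\bone)^\top=\bone^\top P$. For the row mass, $Q\bone=(P-Z)\bone+Z^\top\bone\le\alpha\bone+\beta\bone$, using the two budget constraints from \Cref{def:1D-trimming}.

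For the one-sided error, fix $x,y\ge0$. Then $x^\top(Q-P)y=x^\top Dy-x^\top Zy\le x^\top Dy$, because $Z\ge0$. This is the reason the bound is only one-sided. The remaining term is $x^\top Dy=\sum_i (Z^\top\bone)_i x_iy_i\le\beta\sum_i x_iy_i$. By Cauchy--Schwarz with weights $s^{(t)}_i$,
\[
\sum_i x_iy_i=\sum_i \bigl(\sqrt{s^{(t)}_i}\,x_i\bigr)\bigl(y_i/\sqrt{s^{(t)}_i}\bigr)\le\norm{x}_{S^{(t)}}\norm{y}_{(S^{(t)})^{-1}}.
\]
This gives $Q\approx^+_\beta P$.

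For the full operator error, use the triangle inequality $\norm{Q-P}_{(S^{(t)})^{-1}}\le\norm{D}_{(S^{(t)})^{-1}}+\norm{Z}_{(S^{(t)})^{-1}}$. Diagonal matrices commute with $S^{(t)}$, so $\norm{D}_{(S^{(t)})^{-1}}=\norm{D}_2\le\beta$. The main obstacle is bounding $Z$ by $\kappa C\sqrt\beta$, since we only know $Z\le P$ entrywise together with the column bound. I would use the entrywise Riesz--Thorin style inequality $\norm{Z}_2\le\sqrt{\norm{Z}_1\norm{Z}_\infty}$. Here $\norm{Z}_1\le\beta$ by the column budget, and $\norm{Z}_\infty\le\norm{P}_\infty\le C^2$ by $Z\le P$ and $P\bone\le C^2\bone$. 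Hence $\norm{Z}_2\le C\sqrt\beta$.

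To pass to the stationary norm, note that $\norm{Z}_{(S^{(t)})^{-1}}=\norm{(S^{(t)})^{-1/2}Z(S^{(t)})^{1/2}}_2\le\sqrt{\max_i s^{(t)}_i/\min_i s^{(t)}_i}\;\norm{Z}_2$. Since $s^{(t)}=s(W)\otimes(\text{uniform on }L_t)$ and $s(W)\approx_\kappa\frac1n\bone$, the ratio of the largest to the smallest entry is at most $\kappa^2$. This costs a factor of at most $\kappa$, the same translation used in \Cref{lem:derand-square}, and yields $\kappa C\sqrt\beta$. Adding the diagonal part gives $\norm{Q-P}_{(S^{(t)})^{-1}}\le\kappa C\sqrt\beta+\beta$, as claimed.
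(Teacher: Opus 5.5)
Your proposal is correct and follows the paper's proof essentially step for step. It uses the same diagonal-compensation argument for nonnegativity, column sums, and row mass, the same bound $x^\top(Q-P)y\le\beta x^\top y$ followed by weighted Cauchy--Schwarz for the one-sided error, and the same estimate $\norm{Z}_2\le\sqrt{\norm{Z}_1\norm{Z}_\infty}\le C\sqrt{\beta}$ with a factor-$\kappa$ loss in passing to the stationary norm. Your only addition is spelling out two steps the paper simply asserts: the conversion costs $\kappa$ because the max/min ratio of $s^{(t)}$ is at most $\kappa^2$, and the diagonal term has stationary norm at most $\beta$ because it commutes with $S^{(t)}$.
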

\begin{proof}
Write $\Delta\defeq\diag(Z^\top\bone)$, so $0\le\Delta\le\beta I$.
\begin{itemize}
\item \textbf{Row mass.}
  Since $P-Z\ge0$, the matrix $Q$ is nonnegative.
  Diagonal compensation preserves every column sum, and
  \[
  Q\bone=(P-Z)\bone+\Delta\bone\le(\alpha+\beta)\bone.
  \]
\item \textbf{One-sided error.}
  For $x,y\ge0$, nonnegativity of $Z$ and Cauchy--Schwarz give
  \[
  x^\top(Q-P)y\le x^\top\Delta y\le\beta x^\top y
  \le\beta\norm{x}_{S^{(t)}}\norm{y}_{(S^{(t)})^{-1}}.
  \]
\item \textbf{Full operator error.}
  The column budget and $0\le Z\le P$ give
  \[
  \norm{Z}_2\le\sqrt{\norm{Z}_1\norm{Z}_\infty}\le C\sqrt{\beta}.
  \]
  Converting to the stationary norm costs at most a factor of $\kappa$.
  Since $\Delta$ is diagonal, its stationary operator norm is at most $\beta$.
  Therefore,
  \[
  \begin{aligned}
  \norm{Q-P}_{(S^{(t)})^{-1}}
  &\le\norm{Z}_{(S^{(t)})^{-1}}+\norm{\Delta}_{(S^{(t)})^{-1}}\\
  &\le\kappa C\sqrt{\beta}+\beta.
  \end{aligned}
  \]
\end{itemize}
\end{proof}

\subsection{Assembling the One-Dimensional Chain}\label{subsec:one-dim-guarantees}

We now combine the squaring and trimming bounds to establish the invariants of
\Cref{def:target-stable-matrix-chain}.
Here we assume that the required deletion matrices exist.
Their existence and construction are addressed in Section~\ref{sec:trimming}.

\paragraph{Parameter choices.}
For $0\le t\le T$, set
\[
\eta_t\defeq8^t\eta_0.
\]
Suppose the transition lifts and expanders satisfy
the row-mass and expander error conditions of Lemma~\ref{lem:derand-square}
with a common
$C=O(\kappa^2)$.
We use trimming threshold
$\alpha=\Theta(\kappa^2)$ and choose
$\Gamma\ge1$ and
$\eta_0>0$ so that
$\Gamma\eta_T\le1$ and, for every
$1\le t\le T$,
\[
\beta_t\le4\eta_{t-1},
\qquad
\kappa C\sqrt{\beta_t}+\beta_t\le4\Gamma\eta_{t-1}.
\]
For the choice $\beta_t=\poly(\kappa,\log n)\eta_{t-1}^2$ used in
Section~\ref{sec:trimming}, the square-root term is a fixed polynomial factor
times $\eta_{t-1}$.
Thus we can satisfy these conditions by first choosing
$\Gamma$ sufficiently large and then
$\eta_0$ sufficiently small.

\begin{lemma}\label{lem:one-dim-invariants}
The chain defined by
Equations~\eqref{eq:one-dim-square}--\eqref{eq:one-dim-assemble} with parameters defined as above 
is a target stable matrix chain (\Cref{def:target-stable-matrix-chain}) with parameters
$(\Gamma,\eta_0,\ldots,\eta_T)$.
\end{lemma}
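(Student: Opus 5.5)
We argue by induction on $t$, verifying the three invariants of \Cref{def:target-stable-matrix-chain} at each level. For the base case $t=0$ we have $Q^{(0)}=W=X^{(0)}$, since $L_0$ is a singleton and $J_{L_0}=1$. Both error terms therefore vanish. The row-mass bound $W\bone\le O(\kappa^2)\bone$ follows from stationarity: $Ws(W)=s(W)$ together with $s(W)\approx_\kappa\frac1n\bone$ gives
\[
\frac{1}{\kappa n}(W\bone)_i\le (Ws(W))_i=s(W)_i\le\frac{\kappa}{n}.
\]
Alternatively, we could use constant in-degree if that is available. Column stochasticity and nonnegativity are immediate.

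For the inductive step, assume $Q^{(t-1)}$ satisfies the level-$(t-1)$ invariants. The standing assumptions supply a transition lift with $L(Q^{(t-1)})\bone\le C\bone$ and an expander with $\eps_t\le\eta_{t-1}/(3\kappa C)$. We also have $\Gamma\ge1$ and $\Gamma\eta_{t-1}\le\Gamma\eta_T\le1$. Hence Lemma~\ref{lem:derand-square} applies. It gives that $P^{(t)}$ is nonnegative and column-stochastic, with $P^{(t)}\bone\le C^2\bone$, $P^{(t)}\approx^+_{4\eta_{t-1}}X^{(t)}$, and $\norm{P^{(t)}-X^{(t)}}_{(S^{(t)})^{-1}}\le4\Gamma\eta_{t-1}$. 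Taking for granted (Section~\ref{sec:trimming}) that the deletion matrix $Z^{(t)}=\textsc{Trim}(P^{(t)},\alpha,\beta_t)$ exists, Lemma~\ref{lem:trimming} then applies to $P^{(t)}$. It yields that $Q^{(t)}$ is nonnegative and column-stochastic, and that
\[
Q^{(t)}\bone\le(\alpha+\beta_t)\bone=O(\kappa^2)\bone,
\]
because $\alpha=\Theta(\kappa^2)$ and $\beta_t\le4\eta_{t-1}\le4$. This settles the row-mass invariant.

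The two error invariants follow by adding the squaring error and the trimming error. For the one-sided bound, take $x,y\ge0$ and split
\[
x^\top(Q^{(t)}-X^{(t)})y=x^\top(Q^{(t)}-P^{(t)})y+x^\top(P^{(t)}-X^{(t)})y .
\]
Each one-sided inequality only needs to hold for nonnegative test vectors, so the two bounds can be added. This gives the bound
\[
(\beta_t+4\eta_{t-1})\norm{x}_{S^{(t)}}\norm{y}_{(S^{(t)})^{-1}}\le 8\eta_{t-1}\norm{x}_{S^{(t)}}\norm{y}_{(S^{(t)})^{-1}}=\eta_t\norm{x}_{S^{(t)}}\norm{y}_{(S^{(t)})^{-1}},
\]
using $\beta_t\le4\eta_{t-1}$. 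For the full operator error, the triangle inequality in $\norm{\cdot}_{(S^{(t)})^{-1}}$ gives
\[
\norm{Q^{(t)}-X^{(t)}}_{(S^{(t)})^{-1}}\le\kappa C\sqrt{\beta_t}+\beta_t+4\Gamma\eta_{t-1}\le8\Gamma\eta_{t-1}=\Gamma\eta_t,
\]
where the middle inequality is the parameter condition $\kappa C\sqrt{\beta_t}+\beta_t\le4\Gamma\eta_{t-1}$.

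The argument is essentially bookkeeping, and the factor $8$ in $\eta_t=8^t\eta_0$ is chosen exactly to absorb one squaring and one trimming. We expect the main point needing care to be confirming the hypotheses of Lemma~\ref{lem:derand-square} at every level. That lemma needs $\Gamma\eta_{t-1}\le1$, which is why the global requirement $\Gamma\eta_T\le1$ (rather than a levelwise one) is imposed. It also needs the existence of a transition lift with row mass at most $C=O(\kappa^2)$, and the lemma itself only guarantees $Q^{(t-1)}\bone\le O(\kappa^2)\bone$. We will treat the lift bound as part of the stated standing assumptions, as the parameter paragraph does, rather than derive it. We also rely on $Q^{(t)}$ living on $V\times L_t$, which holds because $\widehat A_H$ and $L(Q)$ act on the level-$t$ space.
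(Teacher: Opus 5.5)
Your proposal is correct and follows the paper's proof step for step. It inducts on $t$, gets the base case from $Ws(W)=s(W)$ and the coordinate bounds on $s(W)$, then applies Lemma~\ref{lem:derand-square} followed by Lemma~\ref{lem:trimming}, deriving row mass from $\beta_t\le 4$, the one-sided error from $4\eta_{t-1}+\beta_t\le 8\eta_{t-1}=\eta_t$, and the full operator error from the triangle inequality together with $\kappa C\sqrt{\beta_t}+\beta_t\le4\Gamma\eta_{t-1}$; your explicit checks that $\Gamma\eta_{t-1}\le\Gamma\eta_T\le1$ and that the lift row-mass bound is a standing assumption are exactly the hypotheses the paper uses implicitly.
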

\begin{proof}
We prove the invariants by induction on $t$.
At level zero, $Q^{(0)}=X^{(0)}=W$ and $W\bone\le\kappa^2\bone$.
This row bound follows from $Ws(W)=s(W)$ and the coordinate bounds on $s(W)$.
For the induction step, apply Lemma~\ref{lem:derand-square} to $Q^{(t-1)}$
and Lemma~\ref{lem:trimming} to $P^{(t)}$.
The latter gives nonnegativity and column stochasticity of $Q^{(t)}$.
\begin{itemize}
\item \textbf{Row mass.}
  The budget conditions give $\beta_t\le4$, and hence
  \[
  Q^{(t)}\bone\le(\alpha+\beta_t)\bone\le O(\kappa^2)\bone.
  \]
\item \textbf{One-sided error.}
  Adding the two one-sided bounds gives
  \[
  Q^{(t)}\approx_{4\eta_{t-1}+\beta_t}^{+}X^{(t)}
  \wrt\bigl(S^{(t)},(S^{(t)})^{-1}\bigr),
  \]
  where $4\eta_{t-1}+\beta_t\le8\eta_{t-1}=\eta_t$.
\item \textbf{Full operator error.}
  By the triangle inequality,
  \[
  \begin{aligned}
  \norm{Q^{(t)}-X^{(t)}}_{(S^{(t)})^{-1}}
  &\le\norm{Q^{(t)}-P^{(t)}}_{(S^{(t)})^{-1}}
  +\norm{P^{(t)}-X^{(t)}}_{(S^{(t)})^{-1}}\\
  &\le\kappa C\sqrt{\beta_t}+\beta_t+4\Gamma\eta_{t-1}\\
  &\le8\Gamma\eta_{t-1}=\Gamma\eta_t.
  \end{aligned}
  \]
\end{itemize}
\end{proof}

\section{The Trimming Step}\label{sec:trimming}

Recall in \Cref{def:1D-trimming}, we defined the trimming process that decomposes the derandomized square matrix $P$ into
\[
P = Z + (P-Z)
\]
such that
\[
0
\le Z
\le P,
\quad
(P-Z) \bone
\le \alpha\bone,
\qquad
Z^\top \bone \le \beta\bone.
\]
This removes at most $\beta$ mass from each column and reduces the row sums to at most $\alpha$.
The final construction uses a partial version of the trimming process, called \emph{the trimming step},
that resolves most of the heavy rows instead of all heavy rows. This allows a simple construction which can be implemented in low space.

\subsection{Existence of a Trimming}

We first characterize the existence of the trimming solution using a bipartite
flow problem. For a matrix $P$, a row subset
$I$, and a column subset
$J$, write
\[
P(I,J)\defeq\sum_{i\in I,j\in J}P_{ij}.
\]

The following lemma gives an analogue of Hall's theorem for the trimming
problem.
We refer to this criterion as the \emph{Hall condition}.
\begin{lemma}\label{lem:trimming-existence}
Let $P$ be a nonnegative matrix and
$\alpha,\beta\ge0$.
There exists a matrix $Z$ satisfying
\[
0\le Z\le P,\qquad
(P-Z)\bone\le\alpha\bone,\qquad
Z^\top\bone\le\beta\bone
\]
if and only if
\[
P(I,J)\le\alpha|I|+\beta|J|
\]
for every row subset $I$ and column subset
$J$.
\end{lemma}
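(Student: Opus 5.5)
We prove the two directions separately. Necessity is a direct counting argument. Sufficiency comes from max-flow/min-cut on an auxiliary network, in the spirit of Hall's theorem.

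\textbf{Necessity.} Suppose $Z$ exists, and fix a row set $I$ and a column set $J$. Split
\[
P(I,J)=(P-Z)(I,J)+Z(I,J).
\]
Since $P-Z\ge0$, we have $(P-Z)(I,J)\le\sum_{i\in I}((P-Z)\bone)_i\le\alpha|I|$. Since $Z\ge0$, we have $Z(I,J)\le\sum_{j\in J}(Z^\top\bone)_j\le\beta|J|$. Adding the two bounds gives the Hall condition.

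\textbf{Sufficiency.} Rewrite the constraints in terms of the kept matrix $K\defeq P-Z$, which must satisfy $0\le K\le P$, $K\bone\le\alpha\bone$, and $(P-K)^\top\bone\le\beta\bone$. The last condition says that each column $j$ keeps at least $c_j-\beta$ mass, where $c_j\defeq(P^\top\bone)_j$. Build a network as follows.
\begin{itemize}
\item A source $s$ has an arc to each column node $j$ with capacity $c_j$.
\item Each column node $j$ has an arc to each row node $i$ with capacity $P_{ij}$.
\item Each row node $i$ has an arc to the sink $t$ with capacity $\alpha$.
\item Each column node $j$ has an arc directly to $t$ with capacity $\beta$.
\end{itemize}
The flow on the direct arc $j\to t$ plays the role of the deleted mass in column $j$. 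A flow saturating every source arc, with $K_{ij}$ equal to the flow on $j\to i$ and $Z=P-K$, then gives exactly the desired matrix. Indeed, $0\le K\le P$ holds by the arc capacities. Each row sum of $K$ is at most $\alpha$. Each column of $Z$ has sum $c_j-\sum_i K_{ij}$, which is the flow on $j\to t$ by conservation and hence at most $\beta$.

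\textbf{Main step: the min cut.} By max-flow/min-cut, it suffices to show that every $s$--$t$ cut has capacity at least $\sum_j c_j$. Let the source side consist of $s$, a set $J$ of column nodes, and a set $I$ of row nodes. The cut capacity is then
\[
\sum_{j\notin J}c_j+\beta|J|+\sum_{j\in J,\,i\notin I}P_{ij}+\alpha|I|.
\]
Since $\sum_{j\in J}c_j=P(I,J)+P(\bar I,J)$, where $\bar I$ is the complement of $I$, this capacity is at least $\sum_j c_j$ exactly when $P(I,J)\le\alpha|I|+\beta|J|$. That is precisely the Hall condition for the pair $(I,J)$, so every cut is large enough.

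The main obstacle is only bookkeeping: getting the network and the cut capacity right. One could alternatively invoke Farkas' lemma or LP duality. The extreme points of the dual are $0/1$ indicators of $I$ and $J$, by total unimodularity of the bipartite incidence structure, which again reduces the criterion to the Hall condition.
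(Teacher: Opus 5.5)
Your proof is correct and follows the paper's approach: the same counting argument for necessity, and a max-flow/min-cut network whose cut capacity reduces exactly to the Hall condition for sufficiency. The only difference is the orientation of the network, which is immaterial: the paper feeds row sums from the source and reads off the deleted mass $Z$ on the row-to-column arcs, whereas you feed column sums and read off the kept mass $P-Z$ on the column-to-row arcs.
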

\begin{proof}
\textbf{Necessity.}
  For any row subset $I$ and column subset
  $J$, the row and column budgets give
  \[
  \begin{aligned}
  P(I,J)
  &=(P-Z)(I,J)+Z(I,J)\\
  &\le\sum_{i\in I,j}(P_{ij}-Z_{ij})
  +\sum_{i,j\in J}Z_{ij}\\
  &\le\alpha|I|+\beta|J|.
  \end{aligned}
  \]
\textbf{Sufficiency.}
  Construct a flow network with a source, one vertex per row, one vertex per
  column, and a sink.
  Give the edge from the source to row $i$ capacity
  $\sum_jP_{ij}$, the edge from row
  $i$ to column
  $j$ capacity
  $P_{ij}$, and the edges from each row and column to the sink capacities
  $\alpha$ and
  $\beta$, respectively.
  Write $M=\sum_{i,j}P_{ij}$.
  A cut whose source side contains rows $I$ and columns
  $J$ has capacity
  \[
  \begin{aligned}
  &\sum_{i\notin I,j}P_{ij}
  +\sum_{i\in I,j\notin J}P_{ij}+\alpha|I|+\beta|J|\\
  &\qquad=M-P(I,J)+\alpha|I|+\beta|J|\\
  &\qquad\ge M.
  \end{aligned}
  \]
  The max-flow/min-cut theorem therefore gives a flow of value $M$,
  saturating every edge out of the source.
  Let $Z_{ij}$ be the flow from row
  $i$ to column
  $j$.
  Edge capacities give $0\le Z\le P$.
  Flow conservation shows that the retained row mass flows directly to the
  sink and that each column's deleted mass flows through its sink edge.
  These masses are at most $\alpha$ and
  $\beta$, respectively.
\end{proof}

The one-sided approximation bound implies the Hall condition as follows.
\begin{lemma}\label{lem:one-sided-hall}
Let $P$ be a nonnegative matrix on
$V\times L_t$ satisfying
\[
P\approx^+_\delta X^{(t)}
\wrt\bigl(S^{(t)},(S^{(t)})^{-1}\bigr)
\]
for some $\delta\ge0$.
Then, for all $I,J\subseteq V\times L_t$,
\[
P(I,J)\le2\kappa^2|I|+\frac{\delta^2}{4}|J|.
\]
\end{lemma}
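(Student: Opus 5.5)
Test the one-sided bound against the indicator vectors $x=\bone_I$ and $y=\bone_J$, which are nonnegative. This gives
\[
P(I,J)=\bone_I^\top P\bone_J\le \bone_I^\top X^{(t)}\bone_J+\delta\sqrt{\bone_I^\top S^{(t)}\bone_I}\sqrt{\bone_J^\top (S^{(t)})^{-1}\bone_J}.
\]
The plan is to bound the main term by $\kappa^2|I|$ and the error term, via AM--GM, by $\kappa^2|I|+\frac{\delta^2}{4}|J|$.

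\textbf{Main term.} Since $X^{(t)}=W^{2^t}\otimes J_{L_t}$ is nonnegative, we have $\bone_I^\top X^{(t)}\bone_J\le \bone_I^\top X^{(t)}\bone$. So it suffices to show that every row sum of $X^{(t)}$ is at most $\kappa^2$. Each row sum of $X^{(t)}$ equals the corresponding row sum of $W^{2^t}$, because $J_{L_t}\bone=\bone$. For $W^{k}$ we use stationarity: $W^k s(W)=s(W)$ together with nonnegativity gives, for each $i$,
\[
\frac{1}{\kappa n}\sum_j (W^k)_{ij}\le \sum_j (W^k)_{ij}s(W)_j=s(W)_i\le \frac{\kappa}{n}.
\]
Hence $W^k\bone\le\kappa^2\bone$, the same argument the paper uses at level zero in Lemma~\ref{lem:one-dim-invariants}.

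\textbf{Error term.} The entries of $s^{(t)}$ are $s(W)_v/|L_t|$, so $S^{(t)}$ and $(S^{(t)})^{-1}$ carry reciprocal normalizations. Writing $N=n|L_t|$ for the size of the state space, $s^{(t)}_a\in[1/(\kappa N),\kappa/N]$ for every state $a$. Therefore
\[
\bone_I^\top S^{(t)}\bone_I\le \kappa|I|/N,\qquad \bone_J^\top (S^{(t)})^{-1}\bone_J\le \kappa N|J|.
\]
The product under the square root is at most $\kappa^2|I||J|$, so the error term is at most $\delta\kappa\sqrt{|I||J|}$. By AM--GM,
\[
\delta\kappa\sqrt{|I||J|}\le \kappa^2|I|+\frac{\delta^2}{4}|J|,
\]
which follows from $2ab\le a^2+b^2$ applied to $a=\kappa\sqrt{|I|}$ and $b=\frac{\delta}{2}\sqrt{|J|}$. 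Adding the main term bound gives $P(I,J)\le 2\kappa^2|I|+\frac{\delta^2}{4}|J|$.

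No step is a serious obstacle. The only points needing care are the normalization of $S^{(t)}$, where the $|L_t|$ factors cancel between $S^{(t)}$ and its inverse, and choosing the AM--GM split so that the constants come out exactly as stated.
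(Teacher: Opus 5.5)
Your proposal is correct and matches the paper's proof essentially step for step. Like the paper, you test with $\bone_I,\bone_J$, bound the main term by the row-sum bound $X^{(t)}\bone\le\kappa^2\bone$, bound the error term by $\delta\kappa\sqrt{|I||J|}$ using the lifted stationary coordinate bounds with $N=n|L_t|$, and split it via the same AM--GM (equivalently, $(\kappa\sqrt{|I|}-(\delta/2)\sqrt{|J|})^2\ge 0$).
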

\begin{proof}
Set $x=\bone_I$,
$y=\bone_J$, and
$N=n|L_t|$.
The coordinate bounds on the lifted stationary distribution give
\[
x^\top S^{(t)}x\le\frac{\kappa|I|}{N},
\qquad
y^\top(S^{(t)})^{-1}y\le\kappa N|J|.
\]
Also, since $X^{(t)}\bone\le\kappa^2\bone$,
\[
x^\top X^{(t)}y
\le x^\top X^{(t)}\bone
\le\kappa^2|I|.
\]
Applying the one-sided bound yields
\[
\begin{aligned}
P(I,J)
&\le x^\top X^{(t)}y
+\delta\sqrt{x^\top S^{(t)}x}
\sqrt{y^\top(S^{(t)})^{-1}y}\\
&\le\kappa^2|I|+\delta\kappa\sqrt{|I||J|}\\
&\le2\kappa^2|I|+\frac{\delta^2}{4}|J|.
\end{aligned}
\]
The last inequality follows by expanding
$(\kappa\sqrt{|I|}-(\delta/2)\sqrt{|J|})^2\ge0$.
\end{proof}


\subsection{A Single Trimming Step}\label{subsec:single-trimming-step}

We define a single trimming step and then bound the number of failed rows.

\begin{definition}[A single trimming step]\label{def:one-step-trim}
Let $P$ be a nonnegative matrix,
$I$ a subset of its rows, and
$\alpha,\beta>0$ the row and column budgets.
The trimming step
\[
Z\gets\textsc{OneStepTrim}(P,I,\alpha,\beta)
\]
is defined as follows.
First define the column ratio
\[
\theta_j\defeq\min\left\{1,
\frac{\beta}{\sum_{i\in I}P_{ij}}\right\},
\]
with $\theta_j=1$ when the denominator is zero.
We propose to delete $P_{ij}\theta_j$ mass from the
$(i,j)$-th entry for each
$i\in I$.
We accept the proposal for a row if the remaining row mass is at most
$\alpha$, and set
\[
Z_{ij}\defeq
\begin{cases}
P_{ij}\theta_j
& i\in I,\ \sum_jP_{ij}(1-\theta_j)\le\alpha,\\
0 & \text{otherwise}.
\end{cases}
\]
\end{definition}

The trimming step tries to delete mass from the rows in $I$ while respecting the column budget $\beta$, so it remains to bound the number of failed rows.
The failure set is defined as 
\[
F\defeq\left\{i\in I:
\sum_jP_{ij}(1-\theta_j)>\alpha\right\}.
\]
The following lemma gives the guarantees of a single trimming step.
\begin{lemma}[Trimming step guarantees] \label{lem:one-step-trim}
Let $Z\gets\textsc{OneStepTrim}(P,I,\alpha,\beta)$
Suppose $\alpha_H>0$ and
$\beta_H\ge0$ satisfy the Hall condition
\[
P(I',J')\le\alpha_H|I'|+\beta_H|J'|
\qquad\text{for all row subsets }I'\text{ and column subsets }J'.
\]
If $\alpha\ge2\alpha_H$, then:
\begin{itemize}
\item\label{itm:one-step-trim:1} \textbf{Deletion and row budgets.}
  We have $0\le Z\le P$,
  $Z^\top\bone\le\beta\bone$, and
  \[
  \sum_j(P_{ij}-Z_{ij})\le\alpha
  \qquad\text{for every }i\in I\setminus F.
  \]
\item\label{itm:one-step-trim:2} \textbf{Failed rows.}
  \[
  |F|\le
  \frac{\beta_H\norm{P\bone}_\infty}{\alpha_H\beta}|I|.
  \]
\end{itemize}
\end{lemma}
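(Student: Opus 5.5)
The plan is to verify the deletion and row budgets directly from \Cref{def:one-step-trim}, and to bound $|F|$ by a double-counting argument on the set of columns where the column ratio actually bites, combined with the Hall condition.

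\textbf{Deletion and row budgets.} Since $0\le\theta_j\le1$, we have $0\le Z_{ij}\le P_{ij}\theta_j\le P_{ij}$. For each column $j$,
\[
\sum_i Z_{ij}\le\theta_j\sum_{i\in I}P_{ij}\le\beta.
\]
This holds by the definition of $\theta_j$: if $\theta_j<1$ the product is exactly $\beta$, and otherwise $\sum_{i\in I}P_{ij}\le\beta$. For $i\in I\setminus F$ the proposal is accepted, so
\[
\sum_j(P_{ij}-Z_{ij})=\sum_jP_{ij}(1-\theta_j)\le\alpha
\]
by the acceptance rule.

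\textbf{Failed rows.} Let $J\defeq\{j:\theta_j<1\}$ be the set of saturated columns, i.e.\ those with $\sum_{i\in I}P_{ij}>\beta$. The term $P_{ij}(1-\theta_j)$ vanishes for $j\notin J$. Hence every $i\in F$ satisfies
\[
P(\{i\},J)\ge\sum_jP_{ij}(1-\theta_j)>\alpha\ge2\alpha_H.
\]
Summing over $F$ gives $P(F,J)>2\alpha_H|F|$ (if $F\neq\emptyset$). Applying the Hall condition with $I'=F$ and $J'=J$ gives $P(F,J)\le\alpha_H|F|+\beta_H|J|$. Comparing the two bounds yields
\[
\alpha_H|F|<\beta_H|J|.
\]

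\textbf{Counting saturated columns.} It remains to bound $|J|$. Each $j\in J$ carries more than $\beta$ mass from rows in $I$, so
\[
\beta|J|<P(I,J)\le\sum_{i\in I}(P\bone)_i\le|I|\,\norm{P\bone}_\infty.
\]
Substituting this into the previous inequality gives
\[
|F|\le\frac{\beta_H\norm{P\bone}_\infty}{\alpha_H\beta}|I|.
\]
The degenerate cases, $F=\emptyset$ or $J=\emptyset$ (the latter forces $F=\emptyset$), are trivial. The only step that needs care is identifying $J$ as the right column set to feed into the Hall condition; once that choice is made, the remaining steps are direct inequalities.
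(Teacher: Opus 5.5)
Your proof is correct and follows the paper's route: you check the budgets directly from the definition, apply the Hall condition to the failed rows against the congested columns, and bound the number of congested columns by $|I|\norm{P\bone}_\infty/\beta$. The only differences are cosmetic: you use the strictly saturated set $\{j:\theta_j<1\}$ instead of $\{j:\sum_{i\in I}P_{ij}\ge\beta\}$, and you invoke $\alpha\ge2\alpha_H$ row by row rather than dividing by $\alpha-\alpha_H$ at the end.
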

\begin{proof}
\textbf{Deletion and row budgets.}
Since $0\le\theta_j\le1$, we have
$0\le Z\le P$.
For each column,
\[
\begin{aligned}
\sum_iZ_{ij}
&\le\theta_j\sum_{i\in I}P_{ij}\\
&=\min\left\{\sum_{i\in I}P_{ij},\beta\right\}\\
&\le\beta.
\end{aligned}
\]
Every accepted row satisfies
\[
\sum_j(P_{ij}-Z_{ij})
=\sum_jP_{ij}(1-\theta_j)\le\alpha.
\]

\textbf{Failed rows.}
Define the set of congested columns
\[
C\defeq\left\{j:\sum_{i\in I}P_{ij}\ge\beta\right\}.
\]
Its size satisfies
\[
\begin{aligned}
\beta|C|
&\le\sum_{j\in C}\sum_{i\in I}P_{ij}\\
&\le\sum_{i\in I}(P\bone)_i\\
&\le|I|\norm{P\bone}_\infty.
\end{aligned}
\]
Outside $C$, we have
$\theta_j=1$.
The definition of the failure set and the Hall condition give
\[
\begin{aligned}
\alpha|F|
&\le\sum_{i\in F,j\in C}P_{ij}(1-\theta_j)\\
&\le P(F,C)\\
&\le\alpha_H|F|+\beta_H|C|.
\end{aligned}
\]
Since $\alpha\ge2\alpha_H$, it follows that
\[
\begin{aligned}
|F|
&\le\frac{\beta_H}{\alpha-\alpha_H}|C|\\
&\le\frac{\beta_H}{\alpha_H}|C|\\
&\le\frac{\beta_H\norm{P\bone}_\infty}
{\alpha_H\beta}|I|.
\end{aligned}
\]
\end{proof}


\subsection{Parameters for the Trimming Step}\label{subsec:trimming-parameters}

For the raw square $P^{(t)}$, \Cref{lem:derand-square} gives one-sided error
$4\eta_{t-1}$ and row-mass bound
$P^{(t)}\bone\le C^2\bone$.
Applying \Cref{lem:one-sided-hall} gives Hall budgets
\[
\alpha_H=2\kappa^2,\qquad
\beta_H=4\eta_{t-1}^2.
\]
For our final algorithm, pick $I$ to be the set of heavy rows and choose
\[
\alpha=4\kappa^2,\qquad
\beta\ge\frac{2C^2}{\kappa}\eta_{t-1}^2.
\]
By \Cref{lem:one-step-trim}, the failure set satisfies
\[
\begin{aligned}
|F|
&\le\frac{\beta_H C^2}{\alpha_H\beta}|I|\\
&=\frac{2C^2\eta_{t-1}^2}{\kappa^2\beta}|I|\\
&\le\frac{|I|}{\kappa}.
\end{aligned}
\]
These bounds also apply to any nonnegative residual matrix bounded entrywise
by $P^{(t)}$.
Since $C=O(\kappa^2)$ and
$\eta_t=8\eta_{t-1}$, a sufficiently large constant in the per-step column
budget
\[
\beta=\Theta(\eta_t^2\kappa^3)
\]
ensures this reduction by a factor of $\kappa$.

The total column budget is accounted for in
Section~\ref{sec:two-dim}.

\section{The Two-Dimensional Chain}\label{sec:two-dim}

If we apply derandomized squaring and trimming alternately and complete each
level before proceeding to the next, a level may require
$O(\log_\kappa(n|L_t|))$ trimming steps.
These costs accumulate over $T=O(\log\kappa)$ levels, increasing the recursion
depth and the space required to evaluate the chain.

In this section, we define the 2D chain, which performs one trimming step per
level in each synchronized round.
Our goal is to bound the number of rounds until the chain stabilizes by
$O(\log_\kappa n + \log\kappa)$.
The parameter choices are given in
Section~\ref{subsec:two-dim-guarantees}.
Starting from $Q^{(0,r)}=W$ for every
$r\ge0$, the chain at each level
$1\le t\le T$ and round
$r\ge0$ is defined by
\begin{align}
P^{(t,r)}
&\defeq\textsc{DerandSquare}(Q^{(t-1,r)},H_t)
\label{eq:two-dim-square}\\
Z^{(t,r)}
&\defeq\textsc{Trim}(P^{(t,r)},(P^{(t,r-1)},Z^{(t,r-1)}),\alpha,\beta)
\label{eq:two-dim-trim}\\
Q^{(t,r)}
&\defeq\textsc{Assemble}(P^{(t,r)},Z^{(t,r)})
\label{eq:two-dim-assemble}
\end{align}
Equation~\eqref{eq:two-dim-square} applies derandomized squaring
(\Cref{def:derand-square}) to the preceding level at the current round,
where $H_t$ is the level-$t$ expander.
Equation~\eqref{eq:two-dim-trim} reuses deletion information from the previous
round and applies a single trimming step to the remaining heavy rows, as
specified in \Cref{def:two-dim-trim}.
Equation~\eqref{eq:two-dim-assemble} deletes unresolved heavy rows to bound
the row mass, as specified in \Cref{def:two-dim-assemble}.
It restores each column sum to one by adding mass to its diagonal entry.

Section~\ref{subsec:two-dim-assembly} defines the trimming and assemble steps.
Section~\ref{subsec:two-dim-analysis} bounds the number of rounds until
stabilization.
Section~\ref{subsec:two-dim-guarantees} collects the parameter choices and
proves that the final chain satisfies the target invariants.

\subsection{The Trimming and Assemble Steps}\label{subsec:two-dim-assembly}

In this section, we define the trimming and assemble steps for the two-dimensional chain. We refer to $P^{(t, r)}$ as the raw matrix, $Z^{(t, r)}$ as the deletion matrix, and
$Q^{(t, r)}$ as the trimmed matrix.

The trimming step uses the previous raw matrix and deletion matrix
$(P^{(t,r-1)},Z^{(t,r-1)})$ as reference. It keeps the deletion information if the raw row has not changed compared to the previous round.


\begin{definition}[Two-dimensional trimming step]\label{def:two-dim-trim}
At level $t$, fix row budget
$\alpha>0$ and column budget
$\beta>0$.
For round $r\ge1$, define the filtered deletion matrix by
\[
\Zhat^{(t, r)}_{ij} \defeq
\begin{cases}
Z^{(t, r-1)}_{ij} & \text{if } (P^{(t, r)})^\top \bone_i
= (P^{(t, r-1)})^\top \bone_i, \\
0 & \text{otherwise}.
\end{cases}
\]
At round zero, set $\Zhat^{(t,0)}=0$.
Define the active set of heavy rows by
\[
I \defeq
\Big\{
i  : \norm{(P^{(t, r)} - \Zhat^{(t, r)})^\top \bone_i}_1 > \alpha
\Big\}.
\]
The two-dimensional trimming step returns
\[
Z^{(t, r)}
\gets \Zhat^{(t, r)}
+ \textsc{OneStepTrim}(P^{(t, r)}, I, \alpha, \beta),
\]
using the one-step routine from \Cref{def:one-step-trim}.
\end{definition}

The assemble step then deletes the unresolved heavy rows.
It restores each column sum to one by adding mass to its diagonal entry.

\begin{definition}[Assemble step]\label{def:two-dim-assemble}
Given the raw matrix $P^{(t,r)}$ and deletion matrix
$Z^{(t,r)}$, define
\[
R_{ij}^{(t,r)} \defeq
\begin{cases}
P^{(t, r)}_{ij} - Z^{(t, r)}_{ij} & \text{if }
\sum_{j} (P^{(t, r)}_{ij} - Z^{(t, r)}_{ij}) \le \alpha, \\
0 & \text{otherwise}.
\end{cases}
\]
The assemble step returns
\[
Q^{(t,r)} \defeq R^{(t,r)}
+ \diag\Bigl(\bigl(P^{(t,r)}-R^{(t,r)}\bigr)^\top\bone\Bigr).
\]
\end{definition}


\subsection{Analysis of the Synchronized Process}\label{subsec:two-dim-analysis}

In this section, we give the core round analysis of the two-dimensional chain. We first define the following stabilization round.

\begin{definition}[Stabilization round]\label{def:stabilization-round}
The stabilization round $r_{\max}$ is the smallest nonnegative integer
$r$ such that, for every level
$1\le t\le T$ and every
$\rhat\ge r$,
\begin{align*}
P^{(t,\rhat)} &= P^{(t,r)},\\
Q^{(t,\rhat)} &= Q^{(t,r)},\\
Z^{(t,\rhat)} &= Z^{(t,r)}.
\end{align*}
If no such round exists, set $r_{\max}=\infty$.
\end{definition}

The assemble step gives up on the set of unresolved heavy rows, so it is important to let the chain stabilize to obtain the final approximation guarantee. We will show that when the two-dimensional chain stabilizes, the final stable matrix chain satisfies the properties of the target stable matrix chain in \Cref{def:target-stable-matrix-chain}.  

\begin{lemma}[Stabilization bound]\label{lem:stabilization-rounds}
The two-dimensional chain (defined by \Cref{eq:two-dim-square}--\Cref{eq:two-dim-assemble}) has the stabilization round
\[
r_{\max}=O(\log_\kappa n+\log\kappa).
\]
\end{lemma}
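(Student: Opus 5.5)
We prove the bound by induction on the level $t$. The induction hypothesis is that level $t$ stabilizes by round $r_t \le r_{t-1} + O(\log_\kappa(n|L_t|))$. Since level $0$ is constant ($Q^{(0,r)}=W$ for all $r$), we set $r_0=0$. The natural per-level bound $O(\log_\kappa(n|L_t|))$ summed over $T=O(\log\kappa)$ levels would give $O(\log\kappa\cdot\log_\kappa n+\log^2\kappa)$, which is too weak. The sharper structure we aim for is a pipelined bound: $r_t\le r_{t-1}+1+\lceil\log_\kappa(n|L_t|)\rceil - (\text{rounds already spent while level }t-1\text{ was settling})$. More precisely, we want to show that the number of \emph{changed} heavy rows at level $t$ contracts by a factor $\kappa$ per round, with a fresh injection only when level $t-1$ changes. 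From this we then argue that the total delay telescopes to $O(\log_\kappa(n|L_T|)+T)$. Using $|L_T|=\prod d_t$ with $\log d_t=O(\log\log n+\log\kappa)$ from \Cref{lem:expander}, we get $\log_\kappa|L_T|=O(T(\log\log n+\log\kappa)/\log\kappa)$. This should be absorbed into $O(\log_\kappa n+\log\kappa)$ under the parameter choices; if not, we fall back to treating $\log_\kappa(n|L_T|)$ as the bound and checking it against the parameters.

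\textbf{Step 1 (freezing).} Once $Q^{(t-1,r)}$ stops changing, i.e.\ for $r\ge r_{t-1}$, the raw matrix $P^{(t,r)}=P^{(t,r_{t-1})}$ is fixed by \Cref{eq:two-dim-square}. Hence every row satisfies the equality test in \Cref{def:two-dim-trim}, and $\Zhat^{(t,r)}=Z^{(t,r-1)}$. The deletion matrix then only accumulates: rows resolved in earlier rounds stay resolved. The active set $I_r$ at round $r$ consists of the rows still heavy after the previous deletions. A row is heavy at round $r+1$ only if it lay in $I_r$ and failed, i.e.\ it belongs to $F_r$. (Rows of $I_r$ that were accepted drop to mass $\le\alpha$. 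Rows outside $I_r$ receive no new deletion, so their mass is unchanged and they are not heavy.)

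\textbf{Step 2 (contraction).} Apply \Cref{lem:one-step-trim} to $P^{(t,r)}$ with $I=I_r$. Its Hall budgets $\alpha_H=2\kappa^2$ and $\beta_H=4\eta_{t-1}^2$ come from \Cref{lem:one-sided-hall}, using the one-sided bound of \Cref{lem:derand-square}. For this we need $Q^{(t-1,r)}$ to satisfy the level-$(t-1)$ invariants, or at least the one-sided bound and row-mass bound required by \Cref{lem:derand-square}, at every round and not merely at stabilization. This is a side induction. The assemble step keeps $Q^{(t-1,r)}\bone\le(\alpha+\text{accumulated }\beta)\bone$, and it only removes nonnegative mass and adds to the diagonal, so it preserves one-sided approximation with error equal to the total diagonal additions. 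With the parameters in \Cref{subsec:trimming-parameters}, this gives $|I_{r+1}|\le|F_r|\le|I_r|/\kappa$. Since $|I_{r_{t-1}}|\le n|L_t|$, after $\lceil\log_\kappa(n|L_t|)\rceil+1$ rounds we have $I=\emptyset$. From then on, $\textsc{OneStepTrim}$ returns $0$, $Z^{(t,r)}$ is constant, and $Q^{(t,r)}$ is constant as well.

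\textbf{Step 3 (pipelining, the main obstacle).} To avoid paying the full $\log_\kappa(n|L_t|)$ at every level, we note that rows of $P^{(t,r)}$ change only where $Q^{(t-1,r)}$ changes, propagated through the sparse derandomized-squaring structure. We would try to show that the set of rows of $Q^{(t-1,r)}$ that change between rounds $r-1$ and $r$ is contained in the heavy-active set at level $t-1$, and therefore shrinks by a factor $\kappa$ per round. The rows of $P^{(t)}$ affected by these changes then form a set whose size is at most $\poly(\kappa)\cdot d$ times as large, which the column-budget choice dominates. Define the potential $\Phi_r=\sum_t\kappa^{-ct}\,|\{\text{changed or active rows at level }t\}|$. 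The goal is $\Phi_{r+1}\le\Phi_r/\kappa^{\Omega(1)}$ once the level-$t$ injection is charged against the level-$(t-1)$ decrease. This would give $r_{\max}=O(\log_\kappa(n|L_T|)+T)$. The delicate point is the reset in the filter: a row whose raw values change loses all its prior deletions and may become heavy again. Controlling these reset rows by the changed rows at the previous level, with a constant blow-up factor, is the step we expect to require the most care.
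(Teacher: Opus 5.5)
Your Steps 1--2 are sound, but they only give a level-by-level bound, the analogue of the paper's conservative bound (Lemma~\ref{lem:conservative-stabilization}). As you note, summing it over levels is too weak, although it is exactly the finiteness input needed below. The content of the lemma is Step 3, which you leave as a plan, and the mechanism you propose there fails.

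To contract the heavy rows at level $t$ by a factor $\kappa$ while level $t-1$ is still changing, you need a Hall condition for the \emph{current} raw matrix $P^{(t,r)}$, and your side induction is meant to supply it. It cannot. At intermediate rounds the assemble step returns the entire mass of unresolved heavy rows to the diagonal, so a column's compensation is bounded only by its total mass $1$, not by a multiple of $\beta$. The one-sided error you can certify for $Q^{(t-1,r)}$ is then only $O(1)$ rather than $\eta_{t-1}$, so Lemma~\ref{lem:derand-square} does not apply to the next square. The Hall budget $\beta_H=\delta^2/4$ is then of constant order, and the failure bound of Lemma~\ref{lem:one-step-trim} is vacuous against $\beta=\Theta(\eta_t^2\kappa^3)$.

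Your containment claim is also false. Rows of $Q^{(t-1,r)}$ change not only at active rows but at every row indexed by a column touched by a row whose deletion changed (diagonal compensation). They also change at rows whose raw values changed because of level $t-2$. This costs a $\poly(\kappa,\log n)$ locality factor rather than being a containment.

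The missing idea is to compare with the final matrix instead of the current one. First use finite stabilization to know that $P^{(t,r_{\max})}$ exists and satisfies the row-mass bound and the Hall condition for \emph{every} row set. Then count rows that are unstable in the forward-looking sense, i.e.\ rows of $P^{(t,r)}$ or $Z^{(t,r)}$ that will change at some later round, rather than rows that just changed.

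Call row $i$ of $P^{(t,r)}$ fully stable if it and every column it touches are already final. Then its proposals $\theta_j$ and its acceptance test at round $r$ coincide with those of $\textsc{OneStepTrim}(P^{(t,r_{\max})},I,\alpha,\beta)$ for the same active set $I$. Lemma~\ref{lem:one-step-trim} applied to the final matrix therefore bounds the failures among such rows by $|I|/\kappa$. Moreover $I\subseteq U(t,r-1)$, since a stable active row would remain heavy in the final chain. Rows that are not fully stable are charged to $u(t-1,r)$ by locality.

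This yields $u(t,r)\le K\,u(t-1,r)+\kappa^{-1}u(t,r-1)$ with $K=\poly(\kappa,\log n)$, hence $u(t,r)\le n(4K)^t\kappa^{-r/2}$. The forward-looking definition also handles the filter resets you were worried about, since a row that will be reset is unstable by definition. Finally, because the injection factor contains $\log n$, weights $\kappa^{-ct}$ do not suffice. With weights $(4K)^{-t}$ the extra cost is $T\log_\kappa(4K)=O(\log\kappa+\log\log n)$, which is absorbed since $\log_\kappa n+\log\kappa\ge2\sqrt{\log n}\ge\log\log n$.
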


We prove \Cref{lem:stabilization-rounds} by bounding the number of unstable
rows.

\begin{definition}[Unstable rows]\label{def:unstable-rows}
At a fixed level $1\le t\le T$, a row of
$P^{(t,r)}$ or
$Z^{(t,r)}$ is stable at round
$r$ if that row remains unchanged at every later round.
The row is called unstable otherwise.
Define
\begin{align*}
U(t,r) &\defeq \bigl\{i : \text{row $i$ of $P^{(t,r)}$ is unstable}
\lor \text{row $i$ of $Z^{(t,r)}$ is unstable}\bigr\},\\
u(t,r) &\defeq |U(t,r)|.
\end{align*}
At level zero, set $U(0,r)=\varnothing$ and
$u(0,r)=0$ for every round.
\end{definition}

It suffices to find a round $r=O(\log_\kappa n+\log\kappa)$ such that
$u(t,r)=0$ at every level.
Then all raw and deletion matrices remain unchanged at every later round,
so the assemble step also produces the same trimmed matrices.
Therefore, $r_{\max}\le r$.

We first prove that the chain stabilizes after finitely many rounds.
This allows us to compare the intermediate matrices with the final stable
matrices when deriving the sharper bound.

\begin{lemma}[Conservative stabilization bound]\label{lem:conservative-stabilization}
With the parameter choices in Section~\ref{subsec:two-dim-guarantees},
\[
r_{\max}=O(\log n\log\kappa).
\]
\end{lemma}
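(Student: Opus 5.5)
We prove the conservative bound by induction on the level $t$, showing that level $t$ stabilizes within $O(t\cdot\log_\kappa(n|L_t|))$ rounds after level $t-1$ has stabilized. This gives $r_{\max}\le\sum_{t\le T}O(\log_\kappa(n|L_T|))=O(T\log_\kappa(n|L_T|))$. Since $T=O(\log\kappa)$ and $\log|L_T|=\sum_t\log d_t=O(\log\kappa\cdot\log\log n)$ is dominated by $O(\log n)$ once the parameters of Section~\ref{subsec:two-dim-guarantees} are plugged in, this is $O(\log n\log\kappa)$. Level $0$ is constant ($Q^{(0,r)}=W$), so the base case is trivial.

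For the inductive step, suppose $Q^{(t-1,r)}$ is independent of $r$ for all $r\ge r_0$. Because \textsc{DerandSquare} is a fixed deterministic function of $Q^{(t-1,r)}$ and $H_t$, the raw matrix $P^{(t,r)}=P^{(t)}$ is also constant for $r\ge r_0$.

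Consider a round $r\ge r_0+1$. Every row's test in \Cref{def:two-dim-trim} passes, so $\Zhat^{(t,r)}=Z^{(t,r-1)}$. A row $i$ that was accepted at some round keeps its deletions, so its residual mass stays at most $\alpha$. It is then never again in the active set $I$, and its row of $Z$ is frozen from that round on.

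Hence the active sets $I_r$ are nested and decreasing. Each $I_{r+1}$ equals the failure set of the one-step trim applied to $P^{(t)}$ on $I_r$. One point needs checking: \textsc{OneStepTrim} is applied to $P^{(t)}$ itself, not to a residual. Rows in $I_r$ have zero previous deletion, since they were never accepted, so the formula in \Cref{def:two-dim-trim} is consistent. The column budget is additive across rounds, but this only affects the approximation accounting, not stabilization.

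To bound the rate of decrease, we invoke \Cref{lem:one-step-trim} with the Hall budgets of \Cref{lem:one-sided-hall}, using the parameter choice of Section~\ref{subsec:trimming-parameters}. This gives $|I_{r+1}|\le|I_r|/\kappa$. Since $|I_{r_0}|\le n|L_t|$ and $\kappa\ge2$, after $O(\log_\kappa(n|L_t|))+1$ rounds the active set is empty. From then on $Z^{(t,r)}$ is constant, and by \Cref{def:two-dim-assemble} so is $Q^{(t,r)}$. That closes the induction.

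The main obstacle is the hypothesis of \Cref{lem:one-sided-hall}. We need $P^{(t)}\approx^+_{4\eta_{t-1}}X^{(t)}$ for the stabilized matrix, which requires the previous level's stable $Q^{(t-1)}$ to satisfy the target invariants of \Cref{def:target-stable-matrix-chain}. We therefore strengthen the induction to say that the stabilized $Q^{(t-1)}$ satisfies those invariants. Once level $t$ stabilizes with $I=\varnothing$, the assemble step zeroes no rows. The stable $Q^{(t)}$ is then exactly of the form $P-Z+\diag(Z^\top\bone)$ with $0\le Z\le P$ and $(P-Z)\bone\le\alpha\bone$. Its column budget is the sum over rounds of $\beta$, which is at most $O(\log_\kappa(n|L_t|))\,\beta$. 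We require this accumulated budget to be the $\beta_t$ of Section~\ref{subsec:one-dim-guarantees}. The argument of \Cref{lem:one-dim-invariants} then transfers verbatim via \Cref{lem:derand-square} and \Cref{lem:trimming}.

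Finally, we check that the lift $L(Q^{(t-1)})$ has row mass at most $C$ and that $\eps_t$ meets the bound of \Cref{lem:derand-square}. Both follow from the parameter choices, using $\beta_t\le 4\eta_{t-1}$, which holds when $\eta_0$ is small.
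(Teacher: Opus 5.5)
Your structure matches the paper's proof. You induct on the level, with the target invariants built into the hypothesis. Once level $t-1$ is frozen, $P^{(t)}$ is fixed. Retained deletion rows have residual at most $\alpha$, so each active set is the previous failure set. \Cref{lem:one-step-trim}, with the Hall budgets of \Cref{lem:one-sided-hall}, then gives a factor-$\kappa$ shrinkage per round. Finally, the resulting valid trimming is fed into \Cref{lem:one-dim-invariants}.

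One step is wrong as written: the column-budget accounting. You bound the final column sums by $O(\log_\kappa(n|L_t|))\,\beta$, which charges only the rounds after level $t-1$ stabilizes. The filter in \Cref{def:two-dim-trim} keeps any deletion row whose raw row happens to be unchanged. That includes rows accepted at rounds $r'<r_0$, while level $t-1$ was still changing. After $r_0$ such rows are never reactivated, because their residual is already at most $\alpha$, so their deletions persist into the final $Z^{(t)}$. Each round's \textsc{OneStepTrim} can add up to $\beta$ per column. The correct bound comes from induction on rounds, using $\Zhat^{(t,r)}\le Z^{(t,r-1)}$ entrywise: $(Z^{(t,r)})^\top\bone\le(r+1)\beta\bone$, counted from round zero, with $r=O(t\log n)$ at completion. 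This is why the paper takes $\beta_t=\Theta(\log n\log\kappa)\,\beta$, ``including round zero.'' With that choice your conclusion survives, but only once the accounting covers all rounds.

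Relatedly, at round $r_0$ an active row may have been accepted earlier and merely had its deletion filtered out, so ``never accepted'' is not the right reason. The right reason its retained deletion is zero is that any row with nonzero retained deletion has residual at most $\alpha$ and is therefore inactive.

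There are also three smaller slips.
\begin{itemize}
\item $\log|L_T|=\sum_t\log d_t=O(\log^2\kappa+\log\kappa\log\log n)$, not $O(\log\kappa\log\log n)$, and it is not $O(\log n)$ in general. What you actually use is $\log|L_T|/\log\kappa=O(\log\kappa+\log\log n)=O(\log n)$. This gives $\log_\kappa(n|L_t|)=O(\log n)$ per level, hence $O(T\log n)=O(\log n\log\kappa)$ overall.
\item Your opening sentence says level $t$ takes $O(t\log_\kappa(n|L_t|))$ rounds after level $t-1$, which would sum to $O(T^2\log_\kappa(n|L_T|))$. You mean $O(\log_\kappa(n|L_t|))$ additional rounds per level.
\item Your first paragraph writes $r_{\max}\le\sum_{t\le T}O(\log_\kappa(n|L_T|))$. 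This contradicts the ``within $O(t\cdot\ldots)$ rounds after level $t-1$'' wording it follows.
\end{itemize}
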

\begin{proof}
We induct on the level, proving that all levels up to $t$ stabilize within
$O(t\log n)$ rounds with valid trimmings and the target invariants.
First, a retained nonzero deletion row already has residual mass at most
$\alpha$, so it is excluded from the active set.
Retained and newly accepted deletion rows therefore have disjoint supports.
Induction on rounds using \Cref{def:two-dim-trim,def:one-step-trim} gives
\[
\begin{aligned}
0&\le Z^{(t,r)}\le P^{(t,r)},\\
\bigl(Z^{(t,r)}\bigr)^\top\bone&\le(r+1)\beta\bone.
\end{aligned}
\]

The base case is $Q^{(0,r)}=W$ for every round.
Suppose the preceding levels have stabilized with the target invariants.
The raw matrix at level $t$ is then fixed, and
\Cref{lem:derand-square,lem:one-sided-hall} give its row-mass bound and Hall
condition.
With the stated parameter choices, \Cref{lem:one-step-trim} reduces the
number of heavy rows by a factor of $\kappa$ per round.
Successful deletion rows remain valid because the raw matrix no longer
changes.
Starting with at most $n|L_t|$ heavy rows, another
$O(\log_\kappa(n|L_t|))=O(\log n)$ rounds reduce their number below one.
No heavy rows remain, so the deletion matrix and the assembled matrix also
remain fixed thereafter.

This level finishes within $O(t\log n)$ rounds.
The column bound above and the choice of the total budget give
\[
\bigl(Z^{(t,r)}\bigr)^\top\bone
\le O(t\log n)\beta\bone
\le\beta_t\bone
\]
at completion, and every residual row has mass at most $\alpha$.
Thus the final deletion matrix is a valid trimming, and the assemble step
has the one-dimensional form.
The inductive argument in \Cref{lem:one-dim-invariants} gives the target
invariants at this level, completing the induction.
Since $T=O(\log\kappa)$, we conclude that
\[
r_{\max}=O(T\log n)=O(\log n\log\kappa).
\]
\end{proof}

To obtain the sharper bound, we establish the following recurrence for
$1\le t\le T$ and
$r\ge1$:
\begin{equation}\label{eq:unstable-rows-recurrence}
u(t,r)\le\poly(\kappa,\log n)\,u(t-1,r)+\kappa^{-1}u(t,r-1).
\end{equation}
The initial condition is
\[
u(t,0)\le n|L_t|\le n\,O(\kappa\log n)^{O(t)}.
\]

We bound $u(t,r)$ by considering whether row
$i$ of
$P^{(t,r)}$ is fully stable, as defined below.

Row $i$ of
$P^{(t,r)}$ is fully stable if both of the following conditions hold:
\begin{enumerate}
\item Row $i$ of
  $P^{(t,r)}$ is stable.
\item For every column $j$ with
  $P^{(t,r)}_{ij}>0$ and every row
  $w$, one of the following holds:
  \begin{enumerate}
  \item $P^{(t,r)}_{wj}>0$, and row
    $w$ of
    $P^{(t,r)}$ is stable.
  \item $P^{(t,r)}_{wj}=0$, and
    $P^{(t,\rhat)}_{wj}=0$ for every
    $\rhat\ge r$.
  \end{enumerate}
\end{enumerate}

\noindent\textbf{Case 1: Row $i$ of
$P^{(t,r)}$ is not fully stable.}
We prove the bound for this case by
\begin{align*}
&\bigl|\{i:\text{row $i$ of $P^{(t,r)}$ is not fully stable}\}\bigr|\\
&\qquad\le \poly(\kappa,\log n)\,
\bigl|\{w:\text{row $w$ of $P^{(t,r)}$ is unstable}\}\bigr|\\
&\qquad\le \poly(\kappa,\log n)\,
\left|\{w:
\text{row $w$ of $Q^{(t-1,r)}$ is unstable}
\}\right|\\
&\qquad \le \poly(\kappa,\log n)\,u(t-1,r).
\end{align*}
The first inequality follows from the definition of fully stable rows and the
fact that every row and every column in the 2D chain has at most
$\poly(\kappa,\log n)$ nonzeros.
The second inequality uses the locality of derandomized squaring and the
$\poly(\kappa,\log n)$ bound on the degree of expanders.
For the last inequality, each index in $U(t-1,r)$ can affect its own row of
$Q^{(t-1,r)}$ and, through diagonal compensation, the rows indexed by columns
touched by its row of $P^{(t-1,\rhat)}$ at some future round
$\rhat\ge r$.
Thus the last inequality follows if each row touches at most
$\poly(\kappa,\log n)$ distinct columns over these rounds.

\vspace{0.3em}
\noindent\textbf{Case 2: Row $i$ of
$P^{(t,r)}$ is fully stable.}
It remains to bound the indices $i\in U(t,r)$ for which row
$i$ of
$P^{(t,r)}$ is fully stable, so row
$i$ of
$Z^{(t,r)}$ must be unstable.
We bound this case using the one-step trimming guarantee.
The challenge is that the guarantee does not apply to intermediate matrices,
because the one-sided error invariant is available only for the final raw
matrix $P^{(t,r_{\max})}$.

Recall that $Z^{(t,r)}$ is produced by the one-step trimming process
\[
Z^{(t,r)} = \Zhat^{(t,r)}
+ \textsc{OneStepTrim}(P^{(t,r)}, I, \alpha, \beta).
\]
where $I$ denotes the active set used at level
$t$ and round
$r$.
For the same active set $I$, a fully stable raw row has identical trimming
proposals in the current and final raw matrices, since that row and every
column it touches are unchanged.
If its deletion row is unstable, it must be active and fail the current
step, since an inactive or successfully trimmed row would remain unchanged.
The proof of \Cref{lem:conservative-stabilization} supplies the row-mass
bound and Hall condition for the final raw matrix.
The Hall condition holds for every row subset, including the current
active set.
Applying the one-step trimming guarantee to $P^{(t,r_{\max})}$ with the same
active set $I$, we conclude that at most
$|I|/\kappa$ rows whose rows in
$P^{(t,r)}$ are fully stable remain unresolved after this trimming step.
Moreover, $I\subseteq U(t,r-1)$.
Otherwise, an active row whose raw and deletion rows were already stable
would remain heavy in the final matrix, contrary to the successful
completion established by \Cref{lem:conservative-stabilization}.
Therefore,
\[
\bigl|\{i\in U(t,r):
\text{row $i$ of $P^{(t,r)}$ is fully stable}\}\bigr|
\le \frac{|I|}{\kappa}
\le \frac{u(t,r-1)}{\kappa}.
\]

Combining the two cases gives Equation~\eqref{eq:unstable-rows-recurrence}.
The guarantees of the final stable chain are collected in
Section~\ref{subsec:two-dim-guarantees}.

\begin{proof}[Proof of \Cref{lem:stabilization-rounds} from the recurrence]
Choose $K=\poly(\kappa,\log n)\ge1$ to bound the coefficient in
Equation~\eqref{eq:unstable-rows-recurrence} and satisfy
$|L_t|\le K^t$ for every level.
Starting from $u(t,0)\le nK^t$ and
$u(0,r)=0$, induction on
$r$ and then on
$t$ gives
\[
u(t,r)\le n(4K)^t\kappa^{-r/2},
\]
since the two terms of the recurrence contribute factors
$1/4$ and
$\kappa^{-1/2}$, whose sum is at most one for
$\kappa\ge2$.
This bound is less than one at every level when
\[
r>2\log_\kappa n+2T\log_\kappa(4K).
\]
The counts are nonnegative integers, so all rows are stable by this round.
Using $T=O(\log\kappa)$, we obtain
\[
\begin{aligned}
r_{\max}
&\le O\left(\log_\kappa n+T\log_\kappa(4K)\right)\\
&\le O\left(\log_\kappa n+\log\kappa+\log\log n\right)\\
&\le O\left(\log_\kappa n+\log\kappa\right).
\end{aligned}
\]
\end{proof}

\subsection{Guarantees of the Final Stable Chain}\label{subsec:two-dim-guarantees}

We collect the parameter choices and prove that the final chain satisfies
\Cref{def:target-stable-matrix-chain}.
The conservative completion argument in
\Cref{lem:conservative-stabilization} supplies the valid final trimmings
needed for the one-dimensional chain analysis.

\paragraph{Parameter choices.}
Choose $T=O(\log\kappa)$ with
$2^T\ge\kappa$, and use power-of-two label-space sizes
$d_t\le\poly(\kappa,\log n)$, so that
$|L_t|\le\poly(\kappa,\log n)^t$.
Fix the transition lifts and expanders at each level, with the lift row-mass
bound in \Cref{lem:derand-square} for a common
$C=O(\kappa^2)$.
Set $\eta_t=8^t\eta_0$ for
$0\le t\le T$ and choose each level-$t$ expander using
\Cref{lem:expander}, with error
\[
\eps_t\le\frac{\eta_{t-1}}{3\kappa C}
\qquad(1\le t\le T).
\]
As in Section~\ref{subsec:trimming-parameters}, choose
\[
\alpha=4\kappa^2,
\qquad
\beta\ge\frac{2C^2}{\kappa}\eta_{t-1}^2,
\qquad
\beta=\Theta(\eta_t^2\kappa^3).
\]
Here $\beta$ is the per-step column budget at level
$t$, fixed across rounds.
Choose the total column budget
\[
\beta_t=\Theta(\log n\log\kappa)\,\beta
=\poly(\kappa,\log n)\eta_{t-1}^2,
\]
with a sufficiently large constant to cover all rounds in
\Cref{lem:conservative-stabilization}, including round zero.
Choose $\Gamma\ge1$ and
$\eta_0>0$ as in Section~\ref{subsec:one-dim-guarantees}, so that
\[
\Gamma\eta_T\le1,
\qquad
\beta_t\le4\eta_{t-1},
\qquad
\kappa C\sqrt{\beta_t}+\beta_t\le4\Gamma\eta_{t-1}
\]
for every $1\le t\le T$.

\begin{lemma}[Final stable chain]\label{lem:two-dim-invariants}
With the parameter choices above, the final chain
\[
\left\{Q^{(t,r_{\max})}\right\}_{t=0}^{T}
\]
is a target stable matrix chain with parameters
$(\Gamma,\eta_0,\ldots,\eta_T)$ in the sense of
\Cref{def:target-stable-matrix-chain}.
\end{lemma}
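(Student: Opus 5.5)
The plan is to reduce the final stable chain to the one-dimensional chain of \Cref{lem:one-dim-invariants}, and then reuse that lemma's induction on the level $t$ essentially verbatim.

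First we invoke \Cref{lem:conservative-stabilization}, so that $r_{\max}<\infty$. Since $Q^{(t-1,r_{\max})}$ is fixed for all later rounds, $P^{(t,r_{\max})}=\textsc{DerandSquare}(Q^{(t-1,r_{\max})},H_t)$ is exactly the raw square of the preceding stable matrix. This is the same relation as \eqref{eq:one-dim-square}.

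The main step is to show that, at each level, the stable deletion matrix $Z\defeq Z^{(t,r_{\max})}$ is a valid output of $\textsc{Trim}(P^{(t,r_{\max})},\alpha,\beta_t)$ in the sense of \Cref{def:1D-trimming}. We also need the assemble step to coincide with \eqref{eq:one-dim-assemble}, meaning that no row is zeroed out by \Cref{def:two-dim-assemble}. The required properties are:
\begin{itemize}
\item $0\le Z\le P$, which holds by the disjoint-support argument from the proof of \Cref{lem:conservative-stabilization}.
\item The column bound $Z^\top\bone\le(r_{\max}+1)\beta\bone\le\beta_t\bone$, which follows from the choice of $\beta_t=\Theta(\log n\log\kappa)\beta$ with a large enough constant.
\item Every row of $P-Z$ has mass at most $\alpha$. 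For this we argue by stability. Suppose a row $i$ had $\sum_j(P-Z)_{ij}>\alpha$ at round $r_{\max}$. Then at round $r_{\max}+1$ the filter keeps $\Zhat=Z$, since the raw row is unchanged, so $i$ is active. \Cref{lem:one-step-trim} applies to the stable raw matrix: its Hall condition comes from \Cref{lem:one-sided-hall} together with the one-sided bound of \Cref{lem:derand-square}, which we have inductively. With $\alpha=4\kappa^2\ge2\alpha_H$, the active set strictly shrinks by a factor $\kappa$ whenever it is nonempty. Hence some active row gets accepted and $Z$ changes, contradicting stabilization. (Alternatively, cite directly the final clause of the conservative proof, that no heavy rows remain.)
\end{itemize}
Once every row satisfies the bound, $R=P-Z$ and $Q^{(t,r_{\max})}=P-Z+\diag(Z^\top\bone)$, which is exactly the one-dimensional form \eqref{eq:one-dim-assemble}.

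With this identification, we run the induction of \Cref{lem:one-dim-invariants}. The base case is $Q^{(0,r_{\max})}=W$, with $W\bone\le\kappa^2\bone$. For the inductive step, the level-$(t-1)$ invariants give \Cref{lem:derand-square}, whose hypotheses $\eps_t\le\eta_{t-1}/(3\kappa C)$ and $\Gamma\eta_{t-1}\le1$ hold by the parameter choices. Then \Cref{lem:trimming} applied with budgets $(\alpha,\beta_t)$ yields the three invariants, using $\beta_t\le4\eta_{t-1}$ and $\kappa C\sqrt{\beta_t}+\beta_t\le4\Gamma\eta_{t-1}$.

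One subtlety is that the Hall condition for level $t$ needs the one-sided invariant at level $t-1$. This is why the induction on levels must carry the invariants and the validity of the trimming together, as in the conservative proof.

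I expect the main obstacle to be the third property: ruling out surviving heavy rows in the stable matrix. The argument needs care that the Hall condition holds for the stable $P$ and not merely for intermediate rounds, and that the active set at round $r_{\max}+1$ is computed exactly from the stable $\Zhat$. Both follow because everything at levels $\le t$ is frozen from $r_{\max}$ on.
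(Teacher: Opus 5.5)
Your proposal is correct and follows essentially the same route as the paper. You use \Cref{lem:conservative-stabilization} to get a stable deletion matrix that is a valid trimming with budgets $(\alpha,\beta_t)$ and leaves no heavy rows, observe that the assemble step then takes the one-dimensional form, and run the level induction of \Cref{lem:one-dim-invariants}; your contradiction at round $r_{\max}+1$ (an accepted active row would change $Z$) simply makes explicit the paper's appeal to ``no heavy rows remain'' in the conservative proof, using the same Hall-condition and one-step-trim bounds.
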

\begin{proof}
The level-by-level argument in \Cref{lem:conservative-stabilization} shows that every
level stabilizes with a valid deletion matrix satisfying
\[
\begin{aligned}
0&\le Z^{(t,r_{\max})}\le P^{(t,r_{\max})},\\
\bigl(P^{(t,r_{\max})}-Z^{(t,r_{\max})}\bigr)\bone
&\le\alpha\bone,\\
\bigl(Z^{(t,r_{\max})}\bigr)^\top\bone&\le\beta_t\bone.
\end{aligned}
\]
No unresolved heavy rows remain, so the assemble step gives
\[
Q^{(t,r_{\max})}=P^{(t,r_{\max})}-Z^{(t,r_{\max})}
+\diag\left(\bigl(Z^{(t,r_{\max})}\bigr)^\top\bone\right).
\]
Together with the squaring relation, this is a one-dimensional chain with
valid trimmings, and \Cref{lem:one-dim-invariants} gives the claim.
\end{proof}

\section{Low-Space Implementation}\label{sec:implementation}

Let $h \defeq O(\log\kappa+\log\log n)$ bound the length of a local label for
one level.
We aim to establish the following entry-query bound for a finite-precision
implementation of the 2D chain:
Given a level $t\le T$, a round
$r\le r_{\max}$, and two lifted vertex indices
$i,j \in V \times L_t$, the
$(i,j)$-th entry of
$P^{(t,r)}$,
$Q^{(t,r)}$, or
$Z^{(t,r)}$ can be computed using
$O(\log n+(T+r)h)$ workspace.
In particular, a query at round $r_{\max}$ uses
$O\bigl((\log n+\log^2\kappa)\log\log n\bigr)$ space.

We aim for a Reingold-style implementation based on deterministic enumeration of
labeled walks, maintaining the current vertex address as we follow each walk.

\subsection{Uniform Weighted Implementation of Trimming}\label{subsec:finite-precision}
We replace weighted edges with parallel copies of equal mass, using a short
\textbf{copy index} to distinguish copies of the same edge.
Unequal transition weights are encoded by different numbers of copies.
We round the weights using $O(\log\kappa+\log\log n)$ bits of precision.
Each trimming step rounds the remaining weight of each edge downward and retains
the corresponding number of copies.
The copies accounting for deleted mass and the additional rounding loss are
redirected to self-loops at the source, including the mass removed from
unresolved rows in the assemble step.
Thus every source has the same total number of outgoing copies, and the
resulting transition matrix has column sums exactly one.
The copies are represented implicitly, with their counts computed as needed
rather than stored explicitly.
Each parallel edge is identified by its original edge label and copy index,
using $O(\log\kappa+\log\log n)$ bits in total, the same order as for
derandomized squaring.
The copy index is stored separately from the vertex address.
The rounding-error analysis is deferred to the full proof.

\subsection{Recursive Evaluation}\label{subsec:recursive-evaluation}
We evaluate the 2D chain recursively by enumerating edge labels and, for
parallel edges, their copy indices, recomputing multiplicities and trimming
decisions as needed.
The key is to share the full vertex address across recursive calls.
A lifted vertex consists of an original vertex and the label coordinates
introduced by derandomized squaring, requiring $O(\log n+Th)$ bits in total
where we recall that $T = O(\log \kappa)$ is the number of levels.
We use the following shared register:
\begin{itemize}
\item \texttt{addr} is a mutable register of $O(\log n+Th)$ bits holding the
  current lifted vertex address.
  All recursive calls share \texttt{addr} instead of storing a separate full
  address at each recursive depth.
\end{itemize}

We distinguish two types of recursive calls:
\begin{itemize}
\item \textbf{Query calls} compute a value, such as a row sum or a trimming
  decision.
  They may modify \texttt{addr} temporarily, but restore it to its value on
  entry before returning.
\item \textbf{Navigation calls} use a short edge label, including a copy index
  for parallel edges, to move \texttt{addr} to a neighboring vertex and return a
  short reverse label.
  They leave \texttt{addr} at the destination, and the reverse label identifies
  the same edge copy and allows the caller to return to the original vertex.
\end{itemize}
A query can therefore navigate to a neighbor, make a recursive query there, and
navigate back using the saved reverse label.
For example, to compute the active-set column load
$\sum_{i\in I}P^{(t,r)}_{ij}$, we start with \texttt{addr} at vertex
$j$ and enumerate its outgoing labels one at a time.
For each label, we navigate to its target $i$, query whether
$i\in I$, and return to
$j$ using the saved reverse label.
We add the edge's weight to the load if the target is active.
Thus, the recursive frame retains only a constant number of short labels, the
loop position, and the partial sum used to compute the query result.

\subsection{Space Accounting}\label{subsec:space-accounting}
We account for the target space bound in terms of the workspace required
by the recursive operations.
Squaring uses the preceding level $(t-1,r)$, while retained trimming information
comes from the preceding round $(t,r-1)$.
The required bound is that each recursive frame at level $t>0$ uses only
$O(h)$ bits in addition to the shared vertex address.
Query calls restore this address before returning, and navigation calls return a
short reverse label.
The frame stores only a constant number of edge labels and copy indices, loop
counters, and partial sums.
The intended workspace bounds for the individual operations are as follows:
\begin{itemize}
\item \textbf{Derandomized squaring.}
  We use the standard derandomized squaring routine.
  Navigation composes a constant number of expander steps and navigation calls
  at the preceding level.
  We keep only the short labels needed for the reverse step.
\item \textbf{Trimming.}
  We scan the incident edges one at a time, computing deletion amounts and copy
  counts as needed.
  The scan reuses storage between edges, so each recursive frame keeps only
  short labels, copy indices, and partial sums, using $O(h)$ bits.
\item \textbf{Comparing rows between rounds.}
  To check whether a row of $P^{(t,r)}$ differs from its row in
  $P^{(t,r-1)}$, we scan incoming edges in both rounds and compare the total
  weight from each source.
  Since we cannot store the whole address in the frame, we compare them bit by
  bit so the frame keeps only short navigation labels, a bit position, and a
  comparison bit for these address comparisons.
\item \textbf{Arithmetic.}
  On a common precision of $O(h)$ bits, summing
  $2^{O(h)}$ contributions requires only
  $O(h)$ bits for the accumulator.
  Loop counters, copy indices, and acceptance times also fit in this bound.
\end{itemize}

The following lemma states the space bound under these workspace and
recursion requirements.

\begin{lemma}[Space accounting]\label{lem:implementation-space}
Suppose an entry query at level $t$ and round
$r$ has a recursive evaluation in which each recursive subquery decreases
$t+r$, with only a constant number of intervening helper frames.
Assume that all calls share an $O(\log n+Th)$-bit vertex address, each
suspended frame uses $O(h)$ additional bits, and leaf routines use
$O(\log n+Th)$ workspace.
Then the query uses
\[
O\bigl(\log n+(T+r)h\bigr)
\]
bits of work space.
\end{lemma}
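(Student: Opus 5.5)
The plan is to bound the recursion stack directly. The total work space splits into three parts: the shared register \texttt{addr}, the suspended frames on the call stack, and the workspace of whichever leaf routine is currently running. By hypothesis, \texttt{addr} costs $O(\log n+Th)$ bits and is stored once, since all calls share it. Each suspended frame costs $O(h)$ bits. A leaf routine, once active, reuses its workspace after returning. It therefore suffices to bound the maximum number of simultaneously suspended frames by $O(T+r)$.

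To bound the stack depth, I assign each frame the potential $\phi=t+r$ of the matrix entry it is evaluating. Level zero is $W$, read directly from adjacency lists, so entries there are leaves. A main recursive subquery either goes to the preceding level $(t-1,r)$, as in derandomized squaring, or to the preceding round $(t,r-1)$, as in retained trimming information and row comparisons. By hypothesis, each such subquery strictly decreases $\phi$. Between two consecutive main frames there are only a constant number $c$ of helper frames, for example navigation calls, row-sum loops, or a OneStepTrim column-load scan. Along any root-to-leaf chain of the call tree, $\phi$ therefore drops by at least one every $c+1$ frames. It starts at $t+r\le T+r$ and never goes below zero, since $t,r\ge0$ and base cases are leaves. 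Hence the depth is at most $(c+1)(T+r+1)=O(T+r)$.

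Summing the three parts gives
\[
O(\log n+Th)+O(T+r)\cdot O(h)+O(\log n+Th)=O\bigl(\log n+(T+r)h\bigr),
\]
which is the claimed bound.

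The main point to justify is that no frame needs a private copy of a full address, since that would add $\log n$ per level. This is exactly what the query/navigation discipline of Section~\ref{subsec:recursive-evaluation} guarantees. A query call restores \texttt{addr} before returning. A navigation call returns an $O(h)$-bit reverse label, which the caller keeps inside its own $O(h)$ frame. So when a frame resumes, \texttt{addr} holds exactly what it expects, and the frame's saved state fits in $O(h)$ bits.

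Navigation at level $t$ invokes navigation at level $t-1$ a constant number of times. This is still one decrease of $\phi$ per constant number of frames, so the depth bound is unaffected. Finally, substituting $r\le r_{\max}=O(\log_\kappa n+\log\kappa)$ (\Cref{lem:stabilization-rounds}), $T=O(\log\kappa)$, and $h=O(\log\kappa+\log\log n)$ gives
\[
(T+r)h=O\bigl((\log_\kappa n+\log\kappa)(\log\kappa+\log\log n)\bigr)=O\bigl((\log n+\log^2\kappa)\log\log n\bigr),
\]
which yields the stated corollary bound.
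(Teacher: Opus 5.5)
Your proposal is correct and follows the paper's argument: the decreasing potential $t+r$, together with the constant helper-frame overhead, bounds the stack depth by $O(t+r+1)$, each suspended frame is charged $O(h)$ bits, and the shared address and the current leaf's workspace are each counted once. Your remarks on the query/navigation discipline and the substitution of $r_{\max}$ go beyond the lemma itself, but they are consistent with the surrounding text.
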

\begin{proof}
Along any recursive call path, the sum of the level and round decreases at
each recursive subquery.
There are therefore $O(t+r+1)$ active frames, using
$O((t+r+1)h)$ bits in total.
The shared address and the workspace of the current leaf are counted once.
Since $t\le T$ and
$T\ge1$, the total space is
\[
O(\log n+Th)+O((t+r+1)h)
=O\bigl(\log n+(T+r)h\bigr).
\]
\end{proof}

Under these assumptions, using \Cref{lem:stabilization-rounds}, the resulting
space bound is
\[
\begin{aligned}
O\bigl(\log n+(T+r_{\max})h\bigr)
&=O\!\left(\log n+
  (\log\kappa+\log_\kappa n)(\log\kappa+\log\log n)\right)\\
&=O\!\left((\log n+\log^2\kappa)\log\log n\right).
\end{aligned}
\]

\section{Wrapping Up: Proofs of the Main Results}\label{sec:main-proofs}

We apply Richardson iteration to approximate the stationary distribution,
using the final target stable matrix chain to construct a preconditioner.
The preconditioner makes the iteration contract toward the stationary vector.
We then obtain the walk-probability guarantee by reducing to the Eulerian case.

\subsection{Approximating Stationary Distribution by Richardson}\label{subsec:stationary-recovery}

Choose the number of levels so that $\kappa\le2^T<2\kappa$ and take
$\eta_0$ small enough that, in addition to the preceding budget conditions,
\[
\Gamma\eta_T\le\frac{1}{16\kappa^2}.
\]
Compress the final matrices to the original vertex set by defining
\begin{equation}\label{eq:compressed-final-chain}
\overline{Q}^{(t)}\defeq
\bigl(\operatorname{Comp}_1\circ\cdots\circ
\operatorname{Comp}_t\bigr)\bigl(Q^{(t,r_{\max})}\bigr),
\end{equation}
with $\overline{Q}^{(0)}=W$.
Uniform compression preserves column stochasticity and does not increase the
operator error in the corresponding stationary norms.
Thus \Cref{lem:two-dim-invariants} gives
\[
\norm{\overline{Q}^{(t)}-W^{2^t}}_{S(W)^{-1}}
\le\Gamma\eta_t.
\]
Define the preconditioner and the iteration matrix by
\begin{equation}\label{eq:recovery-operators}
B\defeq\prod_{t=0}^{T-1}(I+\overline{Q}^{(t)}),
\qquad
M\defeq I-B(I-W).
\end{equation}
The product is ordered by increasing level from left to right.
The preconditioned Richardson iteration is
\begin{equation}\label{eq:richardson-iteration}
x^{(0)}\defeq\frac1n\bone,
\qquad
x^{(q+1)}\defeq x^{(q)}-B(I-W)x^{(q)}=Mx^{(q)}.
\end{equation}
Using the original residual $I-W$ keeps the true stationary vector fixed.

\begin{lemma}[Richardson iteration]\label{lem:stationary-recovery}
The matrix $M$ satisfies
$Ms(W)=s(W)$,
$\bone^\top M=\bone^\top$, and
$\norm{M}_{S(W)^{-1}}\le1$.
For every integer $q\ge0$,
\[
\max_i
\frac{\left|\left(M^q\frac1n\bone\right)_i-s(W)_i\right|}{s(W)_i}
\le\kappa\sqrt n\,2^{-q}.
\]
\end{lemma}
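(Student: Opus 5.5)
The proof compares $M$ with its ideal counterpart. Let $B^\star\defeq\prod_{t=0}^{T-1}(I+W^{2^t})$. The telescoping identity $\prod_{t=0}^{T-1}(I+W^{2^t})(I-W)=I-W^{2^T}$ holds because all factors are polynomials in $W$ and commute. Hence the ideal iteration matrix is $M^\star\defeq I-B^\star(I-W)=W^{2^T}$, and
\[
M=W^{2^T}-(B-B^\star)(I-W).
\]

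\textbf{Fixed points.} The two identities are immediate.
\begin{itemize}
\item $(I-W)s(W)=0$ gives $Ms(W)=s(W)$.
\item Each $\overline{Q}^{(t)}$ is column-stochastic, since compression preserves column stochasticity. So $\bone^\top(I+\overline{Q}^{(t)})=2\bone^\top$, hence $\bone^\top B=2^T\bone^\top$. Then $\bone^\top(I-W)=0$ gives $\bone^\top M=\bone^\top$.
\end{itemize}

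\textbf{Orthogonal splitting.} In the inner product $\langle x,y\rangle=x^\top S(W)^{-1}y$, the line spanned by $s(W)$ is orthogonal to $Y\defeq\{y:\bone^\top y=0\}$, because $\langle s(W),y\rangle=\bone^\top y$. By the two identities, $M$ maps each of these subspaces into itself. So it suffices to prove the key estimate
\[
\norm{My}_{S(W)^{-1}}\le\tfrac12\norm{y}_{S(W)^{-1}}\qquad\text{for }y\in Y.
\]
Writing $x=(\bone^\top x)s(W)+y$, this gives $\norm{M}_{S(W)^{-1}}\le1$. It also gives $M^q x^{(0)}-s(W)=M^q y^{(0)}$ with $y^{(0)}=x^{(0)}-s(W)$, using $\bone^\top x^{(0)}=1$. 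Hence $\norm{M^qx^{(0)}-s(W)}_{S(W)^{-1}}\le2^{-q}\norm{y^{(0)}}_{S(W)^{-1}}\le2^{-q}\norm{x^{(0)}}_{S(W)^{-1}}\le2^{-q}\sqrt\kappa$, since $\sum_i n^{-2}/s(W)_i\le\kappa$. Finally, $|z_i|/s(W)_i\le\norm{z}_{S(W)^{-1}}/\sqrt{s(W)_i}\le\sqrt{\kappa n}\,\norm{z}_{S(W)^{-1}}$, which yields the stated bound $\kappa\sqrt n\,2^{-q}$.

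\textbf{Proof of the key estimate.} For $y\in Y$ we have $s(W)\bone^\top y=0$, so
\[
My=(W^{2^T}-s(W)\bone^\top)y-(B-B^\star)(I-W)y .
\]
The estimate then follows from two bounds.
\begin{itemize}
\item \emph{Mixing term.} Since $2^T\ge\kappa$ and $W$ fixes $s(W)$, we have $W^{2^T}-s(W)\bone^\top=(W^\kappa-s(W)\bone^\top)W^{2^T-\kappa}$. The first factor has entries at most $n^{-10}$, so its spectral norm is at most $n^{-9}$, and converting to the stationary norm costs at most a factor $\kappa\le n$. The second factor has stationary norm at most $1$. So this term is at most $n^{-7}\le1/8$.
\item \emph{Preconditioner error.} Telescope $B-B^\star$ factor by factor. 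Each factor $I+\overline{Q}^{(t)}$ or $I+W^{2^t}$ has stationary norm at most $2+\Gamma\eta_t\le3$. The $t$-th difference is at most $\Gamma\eta_t=8^{t-T}\Gamma\eta_T$. Hence $\norm{B-B^\star}_{S(W)^{-1}}\le3^{T}\cdot\tfrac87\Gamma\eta_T$. Using $3^T<(2\kappa)^{\log 3}\le 4\kappa^2$ (for $\kappa\ge2$) and $\Gamma\eta_T\le1/(16\kappa^2)$, this is at most $\tfrac27$. Together with $\norm{I-W}_{S(W)^{-1}}\le2$, the error term is at most $\tfrac47$.
\end{itemize}

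\textbf{Main obstacle.} The bound $\tfrac18+\tfrac47$ slightly exceeds $\tfrac12$, so the constant bookkeeping is the delicate step. I would first tighten the per-factor bound to $2+\Gamma\eta_t$ and take the product exactly. If that is not enough, I would note that the condition on $\eta_0$ allows shrinking $\Gamma\eta_T$ by a further constant factor without changing anything else. Either change brings the total below $\tfrac12$ with the $\kappa^2$ in the budget absorbing $3^T$.
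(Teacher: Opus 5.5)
Your argument follows the paper's proof: the same decomposition $M=W^{2^T}-(B-B^\star)(I-W)$, the same mixing bound using contraction by powers of $W$, the same orthogonal splitting along $s(W)$, and the same $\sqrt{\kappa}$ initial error and $\sqrt{\kappa n}$ coordinatewise conversion.

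The ``main obstacle'' you flag comes from an off-by-one in your geometric sum, not from the method. Since $t$ runs over $0,\dots,T-1$, you have $\sum_{t<T}\Gamma\eta_t=\Gamma\eta_T\sum_{t<T}8^{t-T}<\Gamma\eta_T/7$, not $\tfrac87\Gamma\eta_T$. With this correction, even your crude factor $3^T\le4\kappa^2$ gives $\norm{B-B^\star}_{S(W)^{-1}}\le 1/28$, and the contraction factor becomes $\tfrac18+\tfrac1{14}<\tfrac12$.

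Your remedy (1) is exactly the paper's step. Normalizing each factor by two, or equivalently bounding by $\prod_t(2+\Gamma\eta_t)-2^T$, gives $\norm{B-B^\star}_{S(W)^{-1}}\le 2^T\sum_{t<T}\Gamma\eta_t\le 2^T\Gamma\eta_T/7\le 1/(56\kappa)$, using $2^T<2\kappa$. Remedy (2) is unnecessary, and it would change the stated hypothesis $\Gamma\eta_T\le 1/(16\kappa^2)$.
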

\begin{proof}
The exact product is the short-walk sum
\[
B_*\defeq\prod_{t=0}^{T-1}(I+W^{2^t})
=\sum_{j=0}^{2^T-1}W^j.
\]
Normalize each product factor by two and telescope the difference of the
ordered products.
Since $\norm{W^j}_{S(W)^{-1}}\le1$ and
$\sum_{t<T}\Gamma\eta_t<\Gamma\eta_T/7\le1$, this gives
\[
\begin{aligned}
\norm{B-B_*}_{S(W)^{-1}}
&\le2^T\left(\prod_{t<T}(1+\Gamma\eta_t/2)-1\right)\\
&\le2^T\sum_{t<T}\Gamma\eta_t\\
&\le\frac{2^T\Gamma\eta_T}{7}\\
&\le\frac18.
\end{aligned}
\]
The mixing assumption and the stationary probability bounds imply
\[
\begin{aligned}
\norm{W^\kappa-s(W)\bone^\top}_{S(W)^{-1}}
&\le\kappa n\max_{i,j}
  |(W^\kappa-s(W)\bone^\top)_{ij}|\\
&\le\kappa n^{-9}\le\frac14.
\end{aligned}
\]
The same bound holds with $\kappa$ replaced by
$2^T$, since multiplication by a power of
$W$ is a contraction in the stationary norm.
Using $B_*(I-W)=I-W^{2^T}$, we obtain
\[
M=W^{2^T}-(B-B_*)(I-W).
\]
Consequently, for every vector $x$ with
$\bone^\top x=0$,
\[
\begin{aligned}
\norm{Mx}_{S(W)^{-1}}
&\le\left(\frac14+
  \norm{B-B_*}_{S(W)^{-1}}\norm{I-W}_{S(W)^{-1}}\right)
  \norm{x}_{S(W)^{-1}}\\
&\le\frac12\norm{x}_{S(W)^{-1}}.
\end{aligned}
\]
Since $\bone^\top B=2^T\bone^\top$ and
$(I-W)s(W)=0$, the iteration preserves total mass and fixes
$s(W)$.
The stationary vector is orthogonal to the zero-sum subspace in the stationary
norm, so the contraction above also gives
$\norm{M}_{S(W)^{-1}}\le1$.

Finally,
\[
\norm{\frac1n\bone-s(W)}_{S(W)^{-1}}^2
=\frac1{n^2}\sum_i\frac1{s(W)_i}-1
\le\kappa.
\]
Coordinatewise conversion and contraction give
\[
\begin{aligned}
\max_i
\frac{\left|\left(M^q\frac1n\bone\right)_i-s(W)_i\right|}{s(W)_i}
&\le\sqrt{\kappa n}\,
\norm{M^q\left(\frac1n\bone-s(W)\right)}_{S(W)^{-1}}\\
&\le\kappa\sqrt n\,2^{-q}.
\end{aligned}
\]
\end{proof}

\begin{proof}[Proof of \Cref{thm:main}]
By \Cref{lem:stationary-recovery}, a power of two
$q=O(\log n)$ Richardson steps gives relative error at most
$n^{-c}/2$.
Evaluate each coordinate of $M^q\frac1n\bone$ to additive error
$n^{-c}/(2\kappa n)$, contributing at most another
$n^{-c}/2$ relative error.

Use balanced matrix multiplication for $M^q$ and for the product of
normalized factors $(I+\overline{Q}^{(t)})/2$ defining
$2^{-T}B$.
Entries of $M$ are obtained from
\[
M_{ij}=\mathbf1_{i=j}-B_{ij}+\sum_v B_{iv}W_{vj}.
\]
The two recursions have total depth $O(\log\log n)$.
Every exact normalized interval product has stationary norm at most two,
and every power of $M$ has norm at most one.
Thus each multiplication amplifies inherited norm error by only a constant
factor, while scalar rounding contributes at most
$\poly(n)\nu$ for rounding unit
$\nu$.
Including the final scaling by $2^T<2\kappa$, choosing
$\nu=2^{-O(\log n)}$ therefore gives the required accuracy.
Intermediate values and partial sums have polynomial magnitude.
The two recursions consequently use $O(\log n\log\log n)$ workspace.

Under the entry-query requirements of Section~\ref{sec:implementation},
a compressed entry is computed by summing lifted entries and dividing by
$|L_t|$.
The entry routine and the $O(\log n+Th)$ bits for compression counters and
accumulators are needed only at the leaves, so their workspace is added once.
The total space is
\[
O\!\left((\log n+\log^2\kappa)\log\log n\right).
\]
\end{proof}

\subsection{Approximating Random-Walk Probabilities}\label{subsec:walk-recovery}

We balance the edge weights obtained from the approximate stationary vector
and apply the Eulerian walk-estimation algorithm of
Ahmadinejad et al.~\cite{AKMPSV20}.

\begin{proof}[Proof of \Cref{cor:intro-walk}]
Fix the walk length $k\le\poly(n)$ and requested error
$n^{-c}$.
Use Theorem~\ref{thm:main} with sufficiently high accuracy to obtain a
positive vector $\widehat{s}$ on a common binary grid satisfying
\[
|\widehat{s}_i-s(W)_i|\le\delta s(W)_i,
\qquad
\delta\le\min\left\{\frac12,\frac{n^{-c}}{16k}\right\}.
\]
Define the imbalance vector and its positive and negative parts by
\[
b\defeq W\widehat{s}-\widehat{s},
\qquad
b_i^+\defeq\max\{b_i,0\},
\qquad
b_i^-\defeq\max\{-b_i,0\}.
\]
Since $W$ is column stochastic,
$\sum_i b_i^+=\sum_i b_i^-$.
Write $\sigma$ for this common sum and define the weighted adjacency matrix
\[
F\defeq W\diag(\widehat{s})+
\frac{b^-(b^+)^\top}{\sigma},
\]
where the second term is zero when $\sigma=0$.
The correction makes the row sums and column sums coincide:
\[
F\bone=W\widehat{s}+b^-
=\widehat{s}+b^+=F^\top\bone.
\]
Thus $F$ is Eulerian, with transition matrix
\[
\widetilde W\defeq F\diag(\widehat{s}+b^+)^{-1}.
\]

Nonnegativity of $W$ and stationarity of
$s(W)$ give
\[
|b_i|
\le|(W(\widehat{s}-s(W)))_i|
  +|\widehat{s}_i-s(W)_i|
\le2\delta s(W)_i.
\]
Comparing the original and added mass in each column gives
\[
\begin{aligned}
\norm{\widetilde W-W}_1
&\le\max_j\frac{2b_j^+}{\widehat{s}_j+b_j^+}\\
&\le\frac{4\delta}{1-\delta}\\
&\le8\delta.
\end{aligned}
\]
Telescoping the powers of the two column-stochastic matrices now gives
\[
\begin{aligned}
\norm{\widetilde W^k-W^k}_1
&\le\sum_{a=0}^{k-1}
\norm{\widetilde W^a}_1
\norm{\widetilde W-W}_1
\norm{W^{k-1-a}}_1\\
&\le8k\delta\\
&\le\frac{n^{-c}}2.
\end{aligned}
\]

The common binary grid for $\widehat{s}$ and the constant out-degree of
$W$ give a common polynomial-size denominator for
$F$.
Scaling by this denominator yields a polynomial-size Eulerian multigraph,
whose adjacency lists are generated by recomputing stationary estimates with
$O(\log n)$ additional workspace.
Apply the Eulerian walk-estimation algorithm of
Ahmadinejad et al.~\cite{AKMPSV20} to estimate
$(\widetilde W^k)_{ij}$ to additive error
$n^{-c}/2$ in
$O(\log n\log\log n)$ space.
The perturbation bound gives the remaining half of the error budget.
The stationary routine's workspace is added once for adjacency queries, so
the total space remains
\[
O\!\left((\log n+\log^2\kappa)\log\log n\right).
\]
\end{proof}

\section*{Acknowledgements}\label{sec:ack}

The author would like to thank Richard Peng for extensive discussions and
valuable feedback on the draft.

\paragraph{LLM usage statement.}
 
On September 7, the author prompted GPT-6 Astra with the main theorem statement and initial proof ideas. These arose from the author’s work on another project on dynamic algorithms. The model provided the key algorithmic ideas, including the trimming process and the synchronized analysis.


The author developed the manuscript based on his own interpretation of the
resulting algorithm, with assistance from GPT Codex and Devin.AI in checking
and reformatting.
GPT Codex also assisted with drafting and revising portions of the exposition
and proofs.

The author takes full responsibility for the results and their presentation.

\bibliographystyle{alpha}
\bibliography{ref}

\end{document}